\documentclass[11pt,a4paper]{article}
\usepackage{fullpage}

\usepackage{amsmath}
\usepackage{amsfonts}
\usepackage{amssymb}
\usepackage{amsthm}
\usepackage{amsbsy}
\usepackage{graphicx}
\usepackage{subcaption}
\usepackage{verbatim}
\usepackage{url}
\usepackage{bbm}
\usepackage{multirow}

\usepackage{hyperref}
\usepackage[svgnames]{xcolor}
\usepackage[capitalise,nameinlink]{cleveref}
\hypersetup{colorlinks={true},linkcolor={DarkBlue},citecolor=[named]{DarkGreen}}

\crefname{ineq}{Ineq.}{Inequalities}
\creflabelformat{ineq}{#2{\upshape(#1)}#3}

\crefname{cond}{Condition}{Conditions}
\creflabelformat{cond}{#2{\upshape(#1)}#3}

\usepackage{tcolorbox}

\usepackage{bm}
\usepackage{tikz}
\usepackage{ctable}

\usepackage[ruled]{algorithm2e}

\SetArgSty{textrm}  

\newtheorem{theorem}{Theorem}
\newtheorem{lemma}[theorem]{Lemma}

\theoremstyle{definition}
\newtheorem{example}{Example}

\usepackage{libertine}


\newcommand\vc{\mathbf}
\newcommand{\E}{\mathbb{E}}
\newcommand{\ud}{\,\mathrm{d}}

\usepackage[sort&compress,square,comma,authoryear]{natbib}

\newcommand{\SW}{\text{SW}}

\newcommand{\calT}{\mathcal{T}}

\begin{document}
\title{\bf Truthful ownership transfer with expert advice:\\Blending mechanism design with and without money}
\author{%
	\makebox[.3\linewidth]{Ioannis Caragiannis\thanks{University of Patras, Greece. E-mail: caragian@ceid.upatras.gr}}
	\and \makebox[.3\linewidth]{Aris Filos-Ratsikas\thanks{University of Liverpool, United Kingdom. E-mail: Aris.Filos-Ratsikas@liverpool.ac.uk}}
	\and \makebox[.3\linewidth]{Swaprava Nath\thanks{Indian Institute of Technology Kanpur, India. E-mail:  swaprava@cse.iitk.ac.in}}
	\and \makebox[.3\linewidth]{Alexandros A. Voudouris\thanks{University of Oxford, United Kingdom. E-mail: Alexandros.Voudouris@cs.ox.ac.uk}}
}
\date{}

\maketitle

\begin{abstract}
When a company undergoes a merger or transfers its ownership, the existing governing body has an opinion on which buyer should take over as the new owner. Similar situations occur while assigning the host of big sports tournaments, like the World Cup or the Olympics. In all these settings, the values of the external bidders are as important as the opinions of the internal experts. Motivated by such scenarios, we consider a social welfare maximizing approach to design and analyze truthful mechanisms in {\em hybrid social choice} settings, where payments can be imposed to the bidders, but not to the experts. Since this problem is a combination of mechanism design with and without monetary transfers, classical solutions like VCG cannot be applied, making this a novel mechanism design problem. We consider the simple but fundamental scenario with one expert and two bidders, and provide tight approximation guarantees of the optimal social welfare. We distinguish between mechanisms that use ordinal and cardinal information, as well as between mechanisms that base their decisions on one of the two sides (either the bidders or the expert) or both. Our analysis shows that the cardinal setting is quite rich and admits several non-trivial randomized truthful mechanisms, and also allows for closer-to-optimal welfare guarantees. 
\end{abstract}

\maketitle

\section{Introduction}\label{sec:intro}
Most well-studied problems in \emph{computational social choice}~\citep{BCEL+16} deal with combining individual preferences over alternatives 
into a collective choice. More often than not, the mechanisms employed for this aggregation task are {\em ordinal}, i.e., they do not use the intensities of the preferences of the individuals, and {\em non-truthful}, which is justified by several impossibility theorems \citep{Gib73,Satterthwaite75,Gib77}. On the other hand, the class of truthful \emph{cardinal} mechanisms has been shown to be much richer \citep{Freixas84,Barbera98,Feige10} and the additional information provided by the numerical values can notably increase the well-being of society \citep{GC:10,filos2014truthful,analytic}.
At the same time, truthful mechanisms \emph{with money} are pretty well-understood and  welfare-maximizing mechanisms for a wide class of problems are known \citep{AGT}. 

However, in a rich set of problems, where monetary transfers are possible only for some participants, designing {\em truthful, cardinal} mechanisms is more challenging -- one needs to combine elements of {\em mechanism design with money} and {\em social choice}. 
In this work, we consider such a setting where the agents are partitioned into two types, such that some of them offer monetary compensations, while other do not. 
The objective is to make a decision that maximizes the {\em social welfare}, which includes the {\em cardinal} values of both types of agents. This is a {\em hybrid social choice} setting, which blends together classical social choice and classical mechanism design with money, but is distinct from both of them, thereby rendering celebrated solutions like the VCG mechanism \citep{vickrey1961counterspeculation,clarke1971multipart,groves1973incentives} insufficient.

Let us provide a few examples of such hybrid social choice scenarios. Government agencies routinely sell public assets such as spectrum, land, or government securities, by transferring their ownership (or usage rights) to interested buyers. As such transfers may have huge impact to citizens, the decision about the new ownership is not simply the outcome of some competitive process among the potential buyers (e.g., through an auction), but usually also involves experts from the citizen community who provide advice regarding the societal impact of each potential ownership transfer \citep{janssen2004auctioning,Indi2018}. In contrast to each potential buyer who faces a value-for-money trade-off, the experts care only about societal value; their compensation is unrelated to the ownership decision and instead depends on their reputation and experience only. The government needs both parties for a successful transfer of the public assets and a reasonable goal would be to maximize the {\em social welfare}, which aggregates the values of buyers and experts for the ownership transfer. Furthermore, in the organization of sporting events, the bids of the potential hosts are taken into consideration along with the recommendations of a respective sports' administrative body (e.g., IOC for Olympic Games, FIFA for the World Cup, FIA for Formula One, etc.). 

\subsection{Our contribution and techniques}\label{sec:contributions}
We study a very simple but fundamental \emph{hybrid social choice} setting with two competing {\em bidders} $A$ and $B$, and a single {\em expert} with cardinal preferences over the three options of selling to bidder $A$, selling to bidder $B$, or not selling at all. 
A mechanism takes as input the bids and the expert's preferences, and decides one of the three options as outcome. In general, mechanisms are {\em randomized}; for a given input, they select the outcome using a probability distribution (or {\em lottery}) over the three options.

We consider mechanisms that can be {\em implemented truthfully}. 
Besides the outcome, the mechanism also outputs payments which are imposed to the bidders. 
The lottery and the payments should be such that
\begin{itemize}
\item the expert is incentivized to report her true preferences in order to maximize her (expected) value for the outcome, and
\item the bidders are incentivized to report their true values as bids in order to maximize their utility, i.e., their expected value for the outcome minus their payment to the mechanism.
\end{itemize}
In the following, we refer to mechanisms with such implementations as {\em truthful mechanisms}.

Interestingly, the theory of mechanism design allows us to abstract away from payments and view truthful mechanisms simply as lotteries. Well-known characterizations for single-parameter mechanism design with money from the literature, as well as new characterizations that we prove here for lotteries that guarantee truthfulness from the expert's side, are the main tools we use in order to constrain the design space of truthful mechanisms in our setting. 

Additional informational restrictions can further divide truthful mechanisms into the following classes:
\begin{itemize}
\item \emph{ordinal} mechanisms, which ignore the exact bids and the expert's preference values and instead take into account only their relative order,
\item \emph{bid-independent} mechanisms, which ignore the bids and base their decision solely on the expert's cardinal preferences,
\item \emph{expert-independent} mechanisms, which ignore the expert's preferences and base their decision solely on the bids, and
\item general truthful mechanisms, which may take {\em both} the bids and the expert's preference values into account.
\end{itemize}

We measure the quality of truthful mechanisms in terms of the {\em social welfare}, i.e., the aggregate value of the bidders and the expert for the outcome. Unfortunately, our setting does not allow for a truthful implementation of the social welfare-maximizing outcome. So, we resort to near-optimal truthful mechanisms and use the notion of the {\em approximation ratio} to measure their quality. Even though the setting that we study seems simple, it turns out that identifying the best possible truthful mechanism for the several classes mentioned above is a challenging task, and the mechanisms themselves, as well as their analyses, are often quite involved.

For the classes of ordinal, bid-independent, and expert-independent mechanisms, we prove lower bounds on the approximation ratio of truthful mechanisms in the class, and identify the best possible among them, with approximation ratios of $1.5$, $1.377$, and $1.343$, respectively. Furthermore, by slightly enhancing expert-independent mechanisms and allowing them to utilize a {\em single bit} of information about the expert's preferences, we define a template for the design of new truthful mechanisms. The template defines {\em always-sell} mechanisms that select either bidder A or bidder B as the outcome. We present two mechanisms that follow our template, one deterministic and one randomized, with approximation ratios $1.618$ and $1.25$, respectively. The former is best-possible among all deterministic truthful mechanisms. The latter is best-possible among all always-sell truthful mechanisms. We also present an unconditional lower bound of $1.141$ on the approximation ratio of any truthful mechanism. These results are summarized in \cref{tab:results}.

\noindent
\begin{table}[h]
\centering
\begin{tabular}{lcl}
\noalign{\hrule height 1pt}\hline
	class of mechanisms & approximation ratio 	& reference\\ 
\noalign{\hrule height 1pt}\hline
	ordinal 			& $1.5$ 				& mechanisms EOM, BOM (\cref{thm:EOM})	 \\
					   	& 						& best possible (\cref{thm:ordinal-lower})	 \\
	bid-independent 	& $1.377$ 				& mechanism BIM (\cref{thm:bim})\\
	 					&  						& best possible (\cref{thm:bim-lower})\\
	expert-independent 	& $1.343$ 				& mechanism EIM (\cref{thm:eim})\\
	 					&  						& best possible (\cref{thm:eim})\\
	our template 		& $1.25$ 				& randomized mechanism $R$ (\cref{thm:R})\\
						&  						& best possible, always-sell (\cref{thm:always-sell})\\
						& $1.618$ 				& deterministic mechanism $D$ (\cref{thm:R})\\
						& 						& best possible, deterministic (\cref{thm:D-lower})\\
	all mechanisms 		& $1.141$ 				& unconditional lower bound (\cref{thm:lower})\\
\noalign{\hrule height 1pt}\hline
\end{tabular}
\caption{Overview of our results. Unless specified otherwise, the term ``best possible'' means best possible among the mechanisms in the particular class of mechanisms.}\label{tab:results}
\end{table}
	
Both our positive and negative results have been possible by narrowing the design space using truthfulness characterizations, the particular structure in each class of mechanisms, as well as the goal of low approximation ratio. In most cases, by carefully blending together all these factors, the design of new mechanisms turns out to be as simple as drawing a curve in a restricted area of a $2$-dimensional plot (e.g., see \cref{fig:eim} and \cref{fig:template}).

\subsection{Related work} \label{sec:related}
Our setting can be viewed as an instance of \emph{approximate mechanism design}, \emph{with} \citep{nisan2001algorithmic} and \emph{without money} \citep{PT09}, which was proposed for problems where the goal is to optimize an objective under the strict truthfulness requirement.
A result that will be very useful to our analysis is Myerson's characterization for single-parameter domains~\citep{myerson1981optimal}, which provides necessary and sufficient conditions for (deterministic or randomized) mechanisms (with money) to be truthful. It allows us to abstract away from the payment functions (which are uniquely determined given the selection probabilities) on the bidders' side. Furthermore, similar arguments based on the same characterization enable us to reason about the structure of truthful mechanisms (without money) on the expert's side as well.

When monetary transfers are allowed, the well-known Vickrey-Clarke-Groves (VCG) mechanism \citep{vickrey1961counterspeculation,clarke1971multipart,groves1973incentives} is deterministic, truthful, and maximizes the social welfare in many settings of interest. However, as we pointed out in the discussion above, in our hybrid mechanism design setting, one needs to take the values of the expert into account as well, and therefore VCG is no longer truthful nor optimal. On the expert's side, truthful mechanisms can be thought of as truthful voting rules; any positive results for deterministic such rules are impaired by the celebrated Gibbard-Satterthwaitte impossibility theorem \citep{Gib73,Satterthwaite75} which limits this class to only \emph{dictatorial mechanisms}. 

In contrast, the class of randomized truthful voting rules is much richer and includes many reasonable truthful rules that are not dictatorial. In fact, \citet{Gib77} characterized the class of all such ordinal rules; a general characterization of all cardinal voting rules is still elusive. To this end, a notable amount of work in the classical economics literature as well as in computer science has been devoted towards designing such rules and proving structural properties for restricted classes. \citet{Gibbard78} provided a similar characterization to his 1977 result, which however only holds for discrete strategy spaces, and later \citet{hylland1980strategy}\footnote{Quite remarkably, this paper is unpublished -- the result was revisited by \citet{dutta2007strategy}.} proved that the class of truthful rules that are Pareto-efficient reduces to random dictatorships. \citet{Freixas84} used the differential approach to mechanism design, proposed by \citet{laffont1980differential}, to design a class of truthful mechanisms which actually characterizes the class of twice differentiable mechanisms over subintervals of the valuation space; the best-possible truthful bid-independent mechanism that we propose in this paper can be seen as a mechanism in this class. \citet{Barbera98} showed that there are many interesting truthful mechanisms which do not fall into the classes considered by \citet{Freixas84}. In the computer science literature, \citet{Feige10} designed a class of one-voter cardinal truthful mechanisms, where the election probabilities are given by certain polynomials. 

Social welfare maximization without payments has been studied in a plethora of related papers in the computer science literature, in general social choice settings \citep{filos2014truthful,bhaskar2018}, as well as in restricted domains, such as matching and allocation problems \citep{filos2014social,analytic,GC:10}. Similarly to what we do here, \citet{filos2014truthful} make use of one-voter cardinal truthful mechanisms to achieve improved welfare guarantees. However, the presence of the bidders significantly differentiates our setting from theirs (as well as the other related works), since we have to consider both sides in the design and analysis of mechanisms. Another relevant notion is that of the \emph{distortion} of (non-truthful) mechanisms which operate under limited information (typically ordinal mechanisms) \citep{CP11,BCHL+15,CNPS16,CPS16,ABEPS18,ABFV19}. While the lack of information has also been a restrictive factor for some of our results (in conjunction with truthfulness), we are mainly interested in cardinal mechanisms for which truthfulness is the limiting constraint.

\subsection{Roadmap}
The rest of the paper is structured as follows. We begin with preliminary definitions, notation and examples in \cref{sec:prelim}. Then, \cref{sec:ordinal}, \cref{sec:bim}, and \cref{sec:eim} are devoted to ordinal, bid-independent and expert-independent mechanisms, respectively. Our template and the corresponding best possible deterministic and randomized mechanisms are presented in \cref{sec:beyond}, while our unconditional lower bounds are presented in \cref{sec:lower-bounds}. We conclude with a discussion of possible extensions and open problems in \cref{sec:open}. Due to lack of space, some proofs appear in appendix.


\section{Preliminaries}\label{sec:prelim}
Our setting consists of two agents $A$ and $B$ who compete for an item and an expert $E$. The agents have valuations $w_A,w_B \in \mathbb{R}_{\geq 0}$ denoting the (maximum) amount of money that they would be willing to spend for the item, and the expert has a valuation function $v: \mathcal{O} \rightarrow \mathbb{R}_{\geq 0}$ over the following three {\em options}: agent $A$ is selected to get the item, or agent $B$ is selected, or no agent is selected to get the item. We use $\oslash$ to denote this last option; hence, $\mathcal{O}=\{A,B,\oslash\}$. We use $\vc{w}=(w_A,w_B)$ to denote an \emph{agent profile} and let $\mathcal{W}$ be the set of all such profiles. Similarly, we use $\vc{v}=(v(A), v(B), v(\oslash))$ to denote an \emph{expert profile} and let $\mathcal{V}$ be the set of all such profiles. The domain of our setting is $\mathcal{D} = \mathcal{V} \times \mathcal{W}$. From now on, we use the term {\em profile} to refer to elements of $\mathcal{D}$.

A mechanism $M$ takes as input a profile $(\vc{v},\vc{w})$ and decides, according to a probability distribution (or {\em lottery}) $P^M$ a pair $(o,\vc{p})$ consisting of an option $o\in \mathcal{O}$ and a vector $\vc{p}=(p_A,p_B)$ indicating the payments that are imposed to the agents. The execution of the mechanism 
yields a utility to the expert and the agents. Given an outcome $(o,\vc{p})$ of the mechanism, the utility of the expert is $u_E(o,\vc{p}) = v(o)$; the utility of agent $i \in \{A,B\}$ is $u_i(o,\vc{p})=w_i-p_i$ if $i=o$ and $u_i(o,\vc{p})=-p_i$ otherwise.

The mechanism asks the expert to submit a {\em report} and the agents to submit their {\em bids}. All of them however may have incentive to misreport their true values in order to maximize their utility.
We are interested in mechanisms that do not allow for such strategic manipulations. We say that a mechanism $M$ is \emph{truthful for agent} $i \in \{A,B\}$ if for any value $w_i$ and any profile $(\vc{v}',\vc{w}')$, 
$$\E[u_i(M(\vc{v}',(w_i,{w'}_{-i}))]\geqslant \E[u_i(M(\vc{v}',\vc{w}'))],$$ 
where the expectation is taken with respect to the lottery $P^M$. This means that bidding her true value $w_i$ is a utility-maximizing strategy for the agent, no matter what the other agent bids and the expert reports. Similarly, mechanism $M$ is said to be \emph{truthful for the expert} if for any expert profile $\vc{v}$ and any profile $(\vc{v}',\vc{w}')$, 
$$\E[u_E(M(\vc{v},\vc{w}'))]\geqslant \E[u_E(M(\vc{v}',\vc{w}'))].$$ 
Again, this means that reporting her true valuation profile is a utility-maximizing strategy for the expert, no matter what the agents bid. A mechanism $M$ is \emph{truthful} if it is truthful for the agents and truthful for the expert.

Our goal is to design truthful mechanisms that achieve high social welfare, which is the total value of the agents and the expert for the outcome. 
 For a meaningful definition of the social welfare that weighs equally the {valuations of the expert and the agents, we adopt a canonical representation of profiles. The expert has 
{\em normalized}
{\em von Neumann-Morgenstern valuations}, i.e., her values for two of the options are $0$ and $1$, while her value for the third option lies in the interval $[0,1]$. The values of the agents are normalized in the definition of the social welfare, which is defined as
\begin{align*}
\SW(o,\vc{v},\vc{w}) = 
\begin{cases}
v(o) +\frac{w_o}{\max\{w_{A},w_{B}\}}, & \text{if } o \in \{A,B\} \\
v(\oslash), & \text{otherwise}.
\end{cases}
\end{align*}
We measure the quality of a truthful mechanism $M$ by its {\em approximation ratio}, which (by abusing notation a bit and interpreting $M(\vc{v},\vc{w})$ as the option decided by the mechanism) is defined as
\begin{align*}
\rho(M)& = \sup_{(\vc{v},\vc{w})\in \mathcal{D}}{\frac{\max_{o\in \mathcal{O}}{\SW(o,\vc{v},\vc{w})}}{\E[\SW(M(\vc{v},\vc{w}),\vc{v},\vc{w})]}}.
\end{align*}
Of course, low values of $\rho(M)$, as close as possible to $1$, are most desirable.

\subsection{An alternative view of profiles and mechanisms}
In order to simplify the exposition of our results in the following sections, we devote some space here to introduce two alternative ways of representing profiles, which we call the {\em expert's view} and the {\em agents' view}.
Without essentially restricting the space of mechanisms that can achieve good approximation ratios according to our definition of social welfare, we focus on mechanisms that base their decisions on the {\em normalized bid} values, i.e., on the quantities $\frac{w_A}{\max\{w_{A},w_{B}\}}$ and $\frac{w_B}{\max\{w_{A},w_{B}\}}$. It will be convenient to use the following two alternative ways
$$\left(\begin{array}{c c c}1 & x& 0\\h_a & \ell_a & z_a\end{array}\right) \mbox{ and } \left[\begin{array}{c c c}h_E & \ell_E& n_E\\1 & y & 0\end{array}\right]$$
to represent a profile $(\vc{v},\vc{w})$. The first representation is the expert's view, and the second one is the agents' view. Each column corresponds to an option. 
\begin{itemize}
\item According to the expert's view on the left, the columns are ordered in terms of the values of the expert, which appear in the first row. The quantities $h_a$, $\ell_a$, and $z_a$ hold the normalized agent bids for the corresponding option and $0$ for option $\oslash$. Essentially, $h_a$ is the value of the expert's favorite option, which can be equal to $1$ if it corresponds to the value of the agent with the highest value ({\em high-bidder}), equal to some value $y \in [0,1]$ if it corresponds to the value of the agent with the lowest value ({\em low-bidder}), or $0$ if it corresponds to the no-sale option $\oslash$. Similarly, $\ell_a$ and $z_a$ are the values of the expert's second and third favorite options, respectively. 

\item According to the agents' view on the right, the columns are ordered in terms of the bids, which appear in the second row. The quantities $h_E$, $\ell_E$, and $n_E$ now hold the valuations of the expert for the corresponding options. Essentially, $h_E$ is the value of the expert for the high-bidder, $\ell_E$ is the value of the expert for the low-bidder, and $n_E$ is the value of the expert for the no-sale option. All of them can take values in the interval $[0,1]$ such that one of them is equal to $1$ and another is equal to $0$. 
\end{itemize}
These representations yield a crisper way to argue about truthfulness for the expert and the agents in our main results. Specifically, in \cref{sec:bim}, we will study bid-independent mechanisms, and therefore it makes sense to use the expert's view of profiles, whereas in \cref{sec:eim}, it will be easier to argue about our expert-independent mechanisms based on the agents' view instead. The agents' view will also be used in \cref{sec:beyond}, where the mechanisms we present use the expert's opinion only to appropriately partition the input profiles into categories, and it is therefore easier to argue about their properties using the agents' view. In \cref{sec:lower-bounds} we will again use the expert's view to prove our unconditional lower bounds.

Similarly to the expert's and the agents' view described above, we use two different representations of the lottery $P^M$, depending on whether we represent profiles according to the expert's or the agents' view. From the expert's viewpoint, $P^M$ is represented by three functions $g^M$, $f^M$, and $\eta^M$, which correspond to the probability of selecting the first, second, and third favorite option of the expert, respectively. Similarly, from the agents' viewpoint, $P^M$ is represented by three functions $d^M$, $c^M$, and $e^M$, which correspond to the probability of selecting the high-bidder, the low-bidder, or option $\oslash$. 

In the upcoming sections, to simplify our discussion, we will sometimes drop $a$, $E$ and $M$ from our notation when the profile view and the mechanisms are clear from context.

\begin{example}\label{ex:views}
Let us present an example. Consider a profile with expert valuations $1$ for option $\oslash$, $0.3$ for option $A$ and $0$ for option $B$, and normalized bids of $1$ from agent $A$ and $0.9$ from agent $B$. Consider a lottery which for the particular profile uses probabilities $0.4$ for option $A$, $0.1$ for option $B$, and $0.5$ for option $\oslash$. 
The expert's and agents' views of the profile are
$$\left(\begin{array}{c c c}1 & 0.3& 0\\0 & 1 & 0.9\end{array}\right) \mbox{ and } \left[\begin{array}{c c c}0.3 & 0& 1\\1 & 0.9 & 0\end{array}\right],$$
respectively. 
The functions $g^M$, $f^M$ and $\eta^M$ are defined over the $4$-tuple of arguments $(x,h_a,\ell_a,z_a)=(0.3,0,1,0.9)$, which compactly represents the expert's view, and take values $0.5$, $0.4$, and $0.1$, respectively. Similarly, the functions $d^M$, $c^M$, and $e^M$ are defined over the $4$-tuple of arguments $(y,h_E,\ell_E,n_E)=(0.9,0.3,0,1)$, which compactly represents the agents' view, and take values $0.4$, $0.1$, and $0.5$, respectively. \hfill $\qed$
\end{example}

To handle ties in the expert's report or the agents' bids, we use the fixed priority $A\succ B\succ \oslash$ in order to identify the high- and low-bidder as well as the highest and lowest expert valuation. For example, if the expert has value $1$ for options $\oslash$ and $B$, we interpret this as option $B$ being her most favorite one. Similarly, when the bids are equal, agent $A$ is always the high-bidder and agent $B$ is the low-bidder. This is used in the definition of our mechanisms only; lower bound arguments do not depend on such assumptions in order to be as general as possible.

\subsection{Reasoning about truthfulness}
Let us now explain the truthfulness requirements having these profile representations in mind. There are two different kinds of possible misreports by the expert. She can attempt to make 
\begin{itemize}
\item a {\em level change in the reported valuation} (or ECh, for short) by changing her second highest valuation without affecting the order of her valuations for the options, or 
\item a {\em swap in the reported valuation} (ESw) by changing the order of her valuations for the options as well as the particular values. 
\end{itemize}
For example, the profile 
$$\left(\begin{array}{c c c}1 &0.6 &0\\0.9&0&1\end{array}\right)$$ 
is the result of a swap in the reported valuation by the expert who changes her valuations from $(1, 0.3, 0)$ (that she has in \cref{ex:views}) to $(0.6, 0, 1)$ for the three options $(\oslash, A, B)$. 

Similarly, there are also two different kinds of possible misreports by each agent. In particular, the agent can attempt to make 
\begin{itemize}
\item a {\em level change in the bid} (BCh) by changing her bid without affecting the order of bids or 
\item a {\em swap in the reported bid } (BSw) by changing both the bid order and the corresponding values. 
\end{itemize}
For example, the profile 
$$\left[\begin{array}{c c c}0 & 0.3& 1\\1 & 0.25 & 0\end{array}\right]$$ 
is the result of a swap in the reported bid by the low-bidder, who increases her bid in the profile of \cref{ex:views} to a new bid that is four times the bid of the other agent.

A truthful mechanism never incentivizes (i.e., it is {\em incentive compatible} with respect to) such misreports. We use the terms ECh-IC, ESw-IC, BCh-IC, and BSw-IC to refer to incentive compatibility with respect to the misreporting attempts mentioned above. A {\em truthful} mechanism, therefore, satisfies all these IC conditions. 
Before we proceed, we provide a few examples of truthful mechanisms. 

\begin{example}[A bid-independent ordinal mechanism]\label{ex:ordinalbiexample}
Consider the following mechanism that ignores the bids reported by the agents. With probability $2/3$ output the expert's favorite option, and with probability $1/3$ output the expert's second favorite option. Adopting the expert's view and the corresponding representation of the lottery $P^M$, the mechanism can be written as:
\begin{align*}
g^M(x,h_a,\ell_a,z_a) = \frac{2}{3}, \ \ \ \ f^M(x,h_a,\ell_a,z_a) = \frac{1}{3} \ \ \ \text{and} \ \ \ \eta^M(x,h_a,\ell_a,z_a)=0.
\end{align*} 
The mechanism can be seen to be truthful by the fact that (a) it ignores the bids of the agents and (b) it always assigns higher probability to the most-preferred outcome for the expert and $0$ probability to the least-preferred outcome. Note that using the terminology above, any ordinal mechanism is ECh by construction, since changing the level in the reported valuation does not change the outcome. \hfill $\qed$
\end{example}

\begin{example}[A bid-independent mechanism which is not ordinal] \label{ex:biexample}
Consider the expert's view (according to which  $x$ is the value of the expert for her second favorite outcome) and the corresponding representation of the lottery $P^M$, which is given by:
\begin{align*}
g^M(x,h_a,\ell_a,z_a) = \frac{4-x^2}{6}, \ \ \ \ f^M(x,h_a,\ell_a,z_a) = \frac{1+2x}{6} \ \ \ \text{and} \ \ \ \eta^M(x,h_a,\ell_a,z_a)=\frac{1-2x+x^2}{6}.
\end{align*} 
Note that this mechanism ignores the bids of the agents and uses the cardinal information reported by the expert. This mechanism has been referred to in the literature as the \emph{quadratic lottery} and has been proved to be truthful \citep{Feige10,Freixas84}.\hfill $\qed$
\end{example}	

\begin{example}[A deterministic expert-independent mechanism] \label{ex:eiexample}
Consider the following mechanism that ignores the expert's values for the different outcomes. 
Output the high-bidder and charge this agent a payment equal to the bid of the other agent.
Charge the other agent a payment of $0$. 
In terms of the agents' view, the outcome of the mechanism can be written as:
\begin{align*}
d^M(y,h_E,\ell_E,n_E) = 1, \ \ \ \ c^M(y,h_E,\ell_E,n_E) = 0 \ \ \ \text{and} \ \ \ e^M(y,h_E,\ell_E,n_E)=0.
\end{align*} 
This mechanism is the well-known \emph{second-price auction}~\citep{vickrey1961counterspeculation}, which is known (and easily seen) to be truthful. \hfill $\qed$
\end{example} 
It is not hard to see that none of the mechanisms presented in the above examples 
can achieve a very strong approximation ratio. As we will see in \cref{sec:ordinal}, the mechanism of \cref{ex:ordinalbiexample} is actually the best possible among the restricted class of ordinal mechanisms. Later on, the use of cardinal information will allow us to decisively outperform it. We also note that while the second-price auction in \cref{ex:eiexample} is welfare-optimal for the agents, which is a well-known fact, it  provides only a $2$-approximation when it comes to our objective of the combined welfare of the agents and the expert.

We continue with important conditions that are necessary and sufficient for BCh-IC and ECh-IC. The next lemma is essentially the well-known characterization of \citet{myerson1981optimal} for single-parameter domains.

\begin{lemma}[\citet{myerson1981optimal}]\label{lem:BCh-IC}
A mechanism $M$ is BCh-IC if and only if the functions $d^M$ and $c^M$ are non-increasing and non-decreasing in terms of their first argument (the low bid), respectively.
\end{lemma}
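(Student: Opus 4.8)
The plan is to reduce BCh-IC, separately for each of the two agents, to the monotonicity condition in Myerson's single-parameter characterization, and then to translate that monotonicity through the bid normalization to recover the stated conditions on $d^M$ and $c^M$. The first step is to isolate a single agent. A BCh deviation changes only the \emph{level} of that agent's bid and, by definition, leaves the bid order fixed; hence the deviating agent keeps her role (high- or low-bidder) and the expert's report, together with the roles of the other option values, is unchanged. Fixing the other agent's bid and the expert's report, the deviating agent therefore faces a standard single-parameter problem: her value for winning is a single real number, her bid is her only report, and (payments aside) only her probability of winning the item enters her utility. I can then invoke \citet{myerson1981optimal}, by which such a mechanism admits truthful payments if and only if the agent's win probability is non-decreasing in her own reported bid. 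Both directions of the claimed equivalence will follow from the two directions of this characterization.

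Next I would carry out the translation for the high-bidder. When an agent is the high-bidder and reports a bid $b$ while the opponent's (true) bid is $w_\ell$, the opponent's normalized bid is $y = w_\ell / b$ and the high-bidder wins with probability $d^M(y, h_E, \ell_E, n_E)$. Holding $w_\ell$, the expert's report, and the order fixed, increasing $b$ strictly \emph{decreases} $y$. Thus the high-bidder's win probability is non-decreasing in her own bid $b$ if and only if $d^M$ is non-increasing in its first argument; and as $b$ ranges over $(w_\ell, \infty)$ and $w_\ell$ over $(0, b)$, the argument $y$ sweeps all of $[0,1)$, so this yields exactly the condition that $d^M$ is non-increasing in $y$.

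The low-bidder case is symmetric but with the opposite direction. When an agent is the low-bidder and reports $b < w_h$, where $w_h$ is the opponent's bid, her normalized bid is $y = b / w_h$ and her win probability is $c^M(y, h_E, \ell_E, n_E)$. Here increasing $b$ \emph{increases} $y$, so her win probability is non-decreasing in her own bid if and only if $c^M$ is non-decreasing in its first argument. Combining the two cases gives the lemma.

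I expect the main, and essentially the only, obstacle to be the bookkeeping of the normalization: since $y$ is the ratio of the low bid to the high bid, a change in the high-bidder's bid enters $y$ through the \emph{denominator} and flips the direction of the monotonicity, whereas a change in the low-bidder's bid enters through the \emph{numerator} and preserves it. Everything else is a direct application of the single-parameter characterization, so once the direction of each monotonicity is pinned down correctly, the two conditions fall out immediately.
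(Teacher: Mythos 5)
Your proof is correct and matches the paper's approach exactly: the paper states this lemma without proof, citing \citet{myerson1981optimal} directly, and your argument is precisely the standard reduction that citation stands in for --- isolating each agent as a single-parameter bidder (valid since a BCh deviation preserves the bid order and hence each agent's role) and applying Myerson's monotonicity characterization on each side. Your careful bookkeeping of the normalization, noting that the high-bidder's bid enters $y$ through the denominator (flipping the monotonicity to non-increasing for $d^M$) while the low-bidder's enters through the numerator (preserving it for $c^M$), correctly fills in the one detail the paper leaves implicit.
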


As long as the output of a mechanism satisfies the monotonicity condition of \cref{lem:BCh-IC}, one can always find payments for the agents that will make the mechanism BCh-IC. In fact, when the mechanisms are required to charge a payment of zero to an agent with a zero bid, then these payments are uniquely defined, and are given by the following formula
\begin{align*}
p_i(w_i,w_{-i}) = w_i\cdot q_i(w_i,w_{-i}) - \int_{0}^{w_i}q_i(t,w_{-i}) \ud t,
\end{align*} 
where $q_i$ is the probability that agent $i \in \{A,B\}$ will be selected as the outcome, $p_i$ is the payment function, $w_i$ is the bid of agent $i$ and $w_{-i}$ is the bid of the other agent. Therefore, we can avoid referring to the payment function when designing our mechanisms, as we can choose the above payment function, provided that the outcome probabilities satisfy the monotonicity conditions of \cref{lem:BCh-IC}. On the other hand, our lower bounds apply to all mechanisms, regardless of the payment function, as they only use the monotonicity condition. \\

\noindent Next, we provide a similar proof to that of \citet{myerson1981optimal} for characterizing ECh-IC in our setting.
\begin{lemma}\label{lem:ECh-IC}
A mechanism $M$ is ECh-IC if and only if the function $f^M$ is non-decreasing in terms of its first argument and the function $g^M$ satisfies 
\begin{align}\label{eq:lem:ECh-IC}
g^M(x,h_a,\ell_a,z_a) = g^M(0,h_a,\ell_a,z_a)-xf^M(x,h_a,\ell_a,z_a)+\int_0^x{f^M(t,h_a,\ell_a,z_a)\ud t},
\end{align}
for every $4$-tuple $(x,h_a,\ell_a,z_a)$ representing a profile as seen by the expert. 
\end{lemma}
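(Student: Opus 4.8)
The plan is to mirror Myerson's argument, treating the expert's reported second-highest value as the single real parameter over which incentive compatibility is imposed. Fix the triple $(h_a,\ell_a,z_a)$; since an ECh deviation changes only the level of the expert's second valuation and not the order of her valuations, these three quantities stay attached to the same options throughout and may be regarded as fixed parameters. I would drop them from the notation and write $g^M(x')$ and $f^M(x')$ for the probabilities of selecting the expert's first and second favourite options when she reports second-highest value $x'\in[0,1]$, her true such value being $x$. Because the expert values her first, second and third favourite options at $1$, $x$ and $0$ respectively, her expected utility from reporting $x'$ is
\[
U(x',x) = g^M(x') + x\, f^M(x').
\]
ECh-IC is precisely the statement that $U(x,x)\geqslant U(x',x)$ for all $x,x'\in[0,1]$.

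For necessity, I would first extract monotonicity of $f^M$ via the standard two-inequality trick: for $x_1<x_2$, adding the truthfulness inequalities $U(x_1,x_1)\geqslant U(x_2,x_1)$ and $U(x_2,x_2)\geqslant U(x_1,x_2)$ yields $(x_2-x_1)\bigl(f^M(x_2)-f^M(x_1)\bigr)\geqslant 0$, so $f^M$ is non-decreasing. Writing $\phi(x):=U(x,x)$, the same two inequalities give the sandwich $(x_2-x_1)f^M(x_1)\leqslant \phi(x_2)-\phi(x_1)\leqslant (x_2-x_1)f^M(x_2)$. Since $f^M$ is monotone it is Riemann integrable, and letting the mesh of a partition of $[0,x]$ tend to zero turns this sandwich into $\phi(x)-\phi(0)=\int_0^x f^M(t)\,\ud t$. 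Substituting $\phi(x)=g^M(x)+x f^M(x)$ and $\phi(0)=g^M(0)$ and rearranging gives exactly \cref{eq:lem:ECh-IC}.

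For sufficiency, I would assume $f^M$ non-decreasing and \cref{eq:lem:ECh-IC}, and compute the deviation gain directly. Using the formula to rewrite both $g^M(x)+x f^M(x)$ and $g^M(x')+x f^M(x')$ in terms of integrals of $f^M$, everything collapses to
\[
U(x,x)-U(x',x)=\int_{x'}^{x}\bigl(f^M(t)-f^M(x')\bigr)\,\ud t,
\]
which is non-negative in both cases $x>x'$ and $x<x'$ by the monotonicity of $f^M$ (in the first case the integrand is non-negative on the whole range of integration, in the second the integrand is non-positive but the limits are reversed). Hence no ECh deviation is profitable.

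The one step that needs care is the passage from the finite-difference sandwich to the integral identity in the necessity direction. The monotonicity of $f^M$ is exactly what licenses this: it guarantees Riemann integrability and makes the lower and upper Riemann sums of $f^M$ squeeze $\phi(x)-\phi(0)$ from both sides. The remaining computations are routine, and I would also note the small bookkeeping point that the report $x'$ must itself lie in $[0,1]$ for the deviation to preserve the order of the expert's valuations, which is what keeps the parameter domain (and hence the limits of integration) fixed.
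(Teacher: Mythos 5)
Your proposal is correct and, in overall structure, follows the same Myerson-style route as the paper: both proofs extract monotonicity of $f^M$ by summing the two truthfulness inequalities for the deviations $x\to x'$ and $x'\to x$, and both verify sufficiency by the same direct computation (your identity $U(x,x)-U(x',x)=\int_{x'}^{x}\bigl(f^M(t)-f^M(x')\bigr)\,\ud t$ is exactly the paper's observation that a profitable deviation would force $(x'-x)f^M(x') < \int_x^{x'} f^M(t)\,\ud t$, contradicting monotonicity). The one step where you genuinely diverge is the derivation of \cref{eq:lem:ECh-IC}: the paper notes that the truthful-utility function $\phi(x)=g^M(x,\cdot)+x f^M(x,\cdot)$ is convex with $f^M$ as a subgradient and then invokes standard convex-analysis results (citing Rockafellar) to obtain the envelope identity, whereas you derive the same identity elementarily by sandwiching $\phi(x_2)-\phi(x_1)$ between $(x_2-x_1)f^M(x_1)$ and $(x_2-x_1)f^M(x_2)$ and letting the mesh of a partition of $[0,x]$ tend to zero, with monotonicity of $f^M$ supplying the Riemann integrability that makes the upper and lower sums converge. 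Your route is more self-contained and makes explicit the integrability point that the paper delegates to the cited theory; the paper's route is shorter and leans on machinery readers of Myerson-type arguments already trust. Both arguments are complete, and your closing remark that the report $x'$ must remain in $[0,1]$ so that the deviation preserves the valuation order (keeping the parameters $(h_a,\ell_a,z_a)$ fixed) is a sound bookkeeping point fully consistent with the paper's definition of an ECh deviation.
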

\noindent As a corollary, functions $g^M$ and $h^M$ are non-increasing in terms of the first argument.

\begin{proof}
To shorten notation, we use $\vc{b}=(h_a,\ell_a,z_a)$ as an abbreviation of the information in the second row of a profile in expert's view and $(x,\vc{b})$ as an abbreviation of $(x,h_a,\ell_a,z_a)$. Also, we drop $M$ from notation (hence, $f(x,\vc{b})$ is used instead of $f^M(x,h_a,\ell_a,z_a)$) since it is clear from context. Due to ECh-IC, the expert has no incentive to attempt a level change of her valuation for her second favorite option from $x$ to $x'$. This means that
\begin{align}\label[ineq]{eq:x-to-xprime-deviation}
g(x,\vc{b})+xf(x,\vc{b}) &\geqslant g(x',\vc{b})+xf(x',\vc{b}).
\end{align}
Similarly, she has no incentive to attempt a level change of her valuation for her second favorite option from $x'$ to $x$. This means that
\begin{align}\label[ineq]{eq:xprime-to-x-deviation}
g(x',\vc{b})+x'f(x',\vc{b}) &\geqslant g(x,\vc{b})+x'f(x,\vc{b}).
\end{align}
By summing \cref{eq:x-to-xprime-deviation} and \cref{eq:xprime-to-x-deviation}, we obtain that
\begin{align*}
(x-x')(f(x,\vc{b})-f(x',\vc{b}))&\geqslant 0,
\end{align*}
which implies that $f$ is non-decreasing in terms of its first argument.
		
To prove \cref{eq:lem:ECh-IC}, we observe that \cref{eq:x-to-xprime-deviation} yields
\begin{align}
g(x,\vc{b})+xf(x,\vc{b}) &\geqslant g(x',\vc{b})+x'f(x',\vc{b})+(x-x')f(x',\vc{b}).
\end{align}
This means that function $g(x,\vc{b})+xf(x,\vc{b})$ is convex with respect to its first argument and has $f$ as its subgradient~\citep{rockafellar2015convex}. Hence, from the standard results of convex analysis we get
\begin{align*}
g(x,\vc{b})+xf(x,\vc{b}) &= g(0,\vc{b})+\int_0^x{f(t,\vc{b})\ud t},
\end{align*}
which is equivalent to \cref{eq:lem:ECh-IC}.
	
To verify that the conditions of the lemma are also sufficient for a mechanism to be ECh-IC, assume that $f$ is non-decreasing, $g$ satisfies \cref{eq:lem:ECh-IC}, and the expert has an incentive to make a level change of her valuation for her second favorite option from $x$ to $x'$. This means that
\begin{align*}
g(x,\vc{b})+xf(x,\vc{b}) < g(x',\vc{b})+xf(x',\vc{b}),
\end{align*}
which, by replacing $g$, is equivalent to 
\begin{align*}
(x'-x)f(x',\vc{b}) < \int_x^{x'} f(t,\vc{b}) \ud t,
\end{align*}
which contradicts the assumption that $f$ is non-decreasing. Hence, the expert does not have any incentive to make such level changes.
\end{proof}

We remark here that while \cref{lem:ECh-IC} will be fundamental for our proofs, it does not provide a characterization of all truthful one-voter mechanisms in the unrestricted social choice setting (such mechanisms are referred to as \emph{unilateral} in the literature). The reason is that (a) it applies only to changes in the intensity of the preferences and not swaps in the ordering of alternatives and (b) it only provides conditions for three alternatives, as opposed to many alternatives in the general setting. 


\section{Warmup: Ordinal mechanisms}\label{sec:ordinal}
We will consider several classes of truthful mechanisms depending on the level of information that they use. Let us warm up with some easy results on ordinal mechanisms, which do not use the exact values of the expert's report and the bids, but only their relative order. It turns out that the best possible approximation ratio of such mechanisms is $3/2$ and is achieved by two symmetric mechanisms, one depending only on the ordinal information provided by the expert (expert-ordinal), while the other depends only on the relation between the bids (bid-ordinal). 

The expert-ordinal mechanism EOM selects the expert's favorite option with probability $2/3$ and her second best option with probability $1/3$. 
Symmetrically, the bid-ordinal mechanism BOM selects the high-bidder with probability $2/3$ and the low-bidder with probability $1/3$. 
As we show next, both EOM and BOM are optimal among all ordinal mechanisms. 

\begin{theorem}\label{thm:EOM}
Mechanisms EOM and BOM are truthful and have approximation ratio at most $3/2$.
\end{theorem}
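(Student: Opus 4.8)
The plan is to verify the four incentive-compatibility conditions for each mechanism and then bound the approximation ratio through a short case analysis driven by the identity of the socially optimal option. Since EOM and BOM are mirror images of one another, I would carry out the argument for EOM in the expert's view and obtain BOM by the symmetric argument in the agents' view.

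For truthfulness of EOM, I would first exploit that the mechanism is bid-independent: an agent's probability of winning depends only on where the expert ranks its option, not on its own bid, so this probability is constant in the bid and hence (weakly) monotone. By \cref{lem:BCh-IC}, zero payments then make EOM both BCh-IC and BSw-IC, as no bid deviation changes the agent's utility. On the expert's side, the lottery probabilities $g^M=2/3$, $f^M=1/3$, $\eta^M=0$ are constants, so $f^M$ is non-decreasing and \cref{eq:lem:ECh-IC} holds with the integral term vanishing; this yields ECh-IC exactly as in \cref{ex:ordinalbiexample}. The only condition not covered by the level-change lemma is ESw-IC, which I would settle by a rearrangement argument: the truthful report yields expected expert value $\frac{2}{3}v(\text{favourite})+\frac{1}{3}v(\text{second})$, and since the weights $\frac{2}{3}\ge\frac{1}{3}\ge 0$ are attached to the options in decreasing order of true value, no swap of the reported order can increase this weighted sum. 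For BOM the roles are exchanged: expert-independence makes ECh-IC and ESw-IC immediate, while on the agents' side the winning probability rises from $1/3$ to $2/3$ as an agent overtakes the other to become the high-bidder, so it is monotone in the agent's own bid and \cref{lem:BCh-IC} again supplies payments ruling out both level changes and swaps.

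For the approximation ratio of EOM, I would express the welfare of selecting each option in the expert's view as a column sum: the favourite, second, and third columns have welfare $1+h_a$, $x+\ell_a$, and $z_a$, where $(h_a,\ell_a,z_a)$ is a permutation of $(1,y,0)$ for some $y\in[0,1]$ and $x\in[0,1]$. The expected welfare of EOM is $\frac{2}{3}(1+h_a)+\frac{1}{3}(x+\ell_a)$, which is at least $\frac{2}{3}(1+h_a)\ge\frac{2}{3}$, while the favourite column alone carries welfare at least $1$. I would then split on which option is optimal: if it is the favourite column, then $\E[\SW]\ge\frac{2}{3}(1+h_a)=\frac{2}{3}\OPT$; if it is the second column, it suffices that $x+\ell_a\le 2\le 2(1+h_a)$, which rearranges to $\E[\SW]\ge\frac{2}{3}\OPT$; and if it is the third column, then $\OPT=z_a\le 1=\frac{3}{2}\cdot\frac{2}{3}\le\frac{3}{2}\,\E[\SW]$. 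The identical computation in the agents' view settles BOM.

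The step I expect to be the main obstacle is the third case of the ratio analysis, where the socially optimal option is precisely the one the mechanism never selects. The bound goes through only because every option the mechanism does favour already contributes a bid (for BOM, an expert) term of at least $1$ to its welfare, so the $\frac{2}{3}$-weight on the top option dominates the at-most-$1$ welfare of the neglected option; keeping this accounting straight, together with the fact that the level-change characterizations in \cref{lem:BCh-IC} and \cref{lem:ECh-IC} do not by themselves exclude swap deviations, is where the care lies, and it is handled by the monotonicity and rearrangement arguments described above.
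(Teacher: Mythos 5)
Your proof is correct and takes essentially the same route as the paper's: truthfulness via bid-/expert-independence plus monotone selection probabilities (your rearrangement argument merely spells out the paper's observation that the probabilities follow the order of the expert's valuations, and your Myerson-monotonicity argument for BOM is exactly the paper's), and the same welfare case analysis hinging on $2(1+h_a)\geqslant x+\ell_a$ — your third case is in fact vacuous, since the favourite column always has welfare $1+h_a\geqslant 1\geqslant z_a$. The only cosmetic difference is that you carry out the ratio computation for EOM in the expert's view and invoke symmetry for BOM, while the paper does the reverse.
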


\begin{proof}
EOM is clearly truthful for the agents since it ignores the bids. It is also clearly truthful for the expert since the probabilities of selecting the options follow the order of the expert's valuations for them. BOM is clearly truthful for the expert (since her input is ignored); truthfulness for the agents follows by observing that the probability of selecting an agent is non-decreasing in terms of her bid.

We prove the approximation ratio for mechanism BOM only; the proof for EOM is completely symmetric. Consider the profile $\left[\begin{array}{c c c}h& \ell & n\\1 & y& 0\end{array}\right]$ in agents' view, where we have dropped $E$ to simplify notation. We distinguish between two cases. If $1+h \geqslant y+\ell$, the optimal welfare is $1+h$ and the approximation ratio is 
\begin{align*}
\frac{1+h}{\frac{2}{3}(1+h)+\frac{1}{3}(y+\ell)} &\leqslant \frac{3}{2}
\end{align*}
since $y+\ell\geqslant 0$. If $1+h\leqslant y+\ell$, the optimal welfare is $y+\ell$ and the approximation ratio is
\begin{align*}
\frac{y+\ell}{\frac{2}{3}(1+h)+\frac{1}{3}(y+\ell)}&= \frac{1}{\frac{2}{3}\frac{1+h}{y+\ell}+\frac{1}{3}} \leqslant \frac{3}{2}
\end{align*}
since $\frac{1+h}{y+\ell}\geqslant \frac{1}{2}$.
\end{proof}

\begin{theorem}\label{thm:ordinal-lower}
The approximation ratio of any ordinal mechanism is at least $3/2$.
\end{theorem}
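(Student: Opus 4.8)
The plan is to establish a lower bound of $3/2$ on the approximation ratio of *any* ordinal mechanism by exhibiting a family of hard profiles and arguing that no ordinal lottery can do well on all of them simultaneously. Since an ordinal mechanism sees only the relative order of the expert's values and the relative order of the bids, its behavior is constant across all profiles sharing the same ordinal type. The strategy is to pin down a single ordinal type (a fixed ordering of expert preferences together with a fixed order of the bids), parametrize the corresponding cardinal profiles by the free continuous quantities (the expert's middle value $x$ and the low-bidder's normalized value $y$), and then show that the fixed probability vector the mechanism must use on this type cannot approximate the optimum well across the whole parameter range.

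\smallskip
First I would fix an ordinal type and name the mechanism's (forced-constant) selection probabilities on it, say probabilities $d,c,e$ for the high-bidder, low-bidder, and $\oslash$ respectively (adopting the agents'-view notation from the preliminaries), with $d+c+e=1$. I would then select a small number of extreme cardinal instances within this type — for example, pushing $x$ and $y$ toward the endpoints $\{0,1\}$ so that the social-welfare-maximizing option shifts from one option to another — and write down the approximation ratio constraint $\E[\SW] \geq \tfrac{2}{3}\,\OPT$ that $3/2$-approximation would require at each instance. Each such instance yields a linear inequality in $(d,c,e)$.

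\smallskip
The heart of the argument is to choose the instances so that these linear inequalities are jointly infeasible (or feasible only in the limit), forcing the ratio strictly above $3/2$ in the worst case over the family, and matching the upper bound of \cref{thm:EOM}. Concretely, I expect two competing instances to suffice: one where selling to the high-bidder is strongly optimal (forcing $d$ large) and one where the $\oslash$ option or the low-bidder becomes optimal (forcing weight away from $d$), so that no single $(d,c,e)$ can cover both. Because the bound is tight and EOM/BOM achieve exactly $3/2$, I would expect the extremal instances to make the two inequalities meet precisely at $3/2$, so the supremum is attained (or approached) there.

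\smallskip
The main obstacle is bookkeeping rather than conceptual: an ordinal mechanism's behavior may in principle depend on the full ordinal type, so to get an \emph{unconditional} ordinal lower bound I must either argue that the adversary can always steer into one inconvenient type, or handle the relevant types uniformly. The delicate point is that the expert's favorite option and the high-bidder may or may not coincide, and the tie-breaking and normalization conventions interact with which option is welfare-optimal; I would therefore be careful to pick a type (e.g., the expert's top option coinciding with the low-bidder, creating genuine tension between the two sides) for which a clean two-instance argument forces the ratio up to $3/2$, and verify the continuity/limit behavior of the supremum in $\rho(M)$ so the bound is exactly $3/2$ and not merely approached.
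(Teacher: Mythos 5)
Your overall skeleton matches the paper's proof exactly: fix a single ordinal type, use the fact that the lottery is constant across all cardinal profiles of that type, and play extreme instances within the type against each other so that the resulting constraints on the fixed probability vector meet precisely at $3/2$. One small correction to your last paragraph: no ``steering'' argument or uniform treatment of types is needed --- a lower bound only requires exhibiting bad profiles, so the adversary simply presents instances of the chosen type, and the mechanism's behaviour on every other type is irrelevant.

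The genuine gap is that the choice of type is the crux, not bookkeeping, and the one concrete type you propose provably cannot yield $3/2$. Suppose the expert's favourite option is the low-bidder with normalized bid $y$. If her second favourite is $\oslash$ (with value $x$), the three welfares are $1+y$, $x$, and $1$, so the mechanism that always selects the expert's favourite is exactly optimal on this type and the ratio is $1$. If instead her second favourite is the high-bidder, the welfares are $1+y$, $1+x$, and $0$, and the lottery $(1/2,1/2,0)$ gives ratio at most $\frac{2\max(1+y,\,1+x)}{(1+y)+(1+x)} \leqslant 4/3$ on every instance of the type, since the smaller of the two welfares is at least $1$. So no collection of instances within your suggested type forces more than $4/3$; the tension you describe is real but too weak. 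The paper's hard type is the one where the expert's favourite is $\oslash$, her second favourite is the \emph{low-bidder}, and her least favourite is the \emph{high-bidder}, with the expert's middle value and the normalized low bid both set to $\epsilon$: then the two non-middle options each have welfare exactly $1$, so the mechanism effectively controls a single number $p$ (the probability of the middle option), whose welfare sweeps from $2\epsilon$ up to $2(1-\epsilon)$ as $\epsilon$ ranges over $(0,1/2)$. The two limits force the ratio $\max\{1/(1-p),\, 2/(1+p)\}$, minimized at $p=1/3$ with value $3/2$. Without discovering this specific type --- in which the two ``extreme'' options are welfare-equivalent, collapsing the problem to one dimension --- your plan as written stalls strictly below the target bound.
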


\begin{proof}
Let $\epsilon \in (0,1/2)$ and consider the following two profiles:
$$\left(\begin{array}{c c c} 1 & \epsilon & 0\\ 0 & \epsilon & 1 \end{array}\right) \mbox{ and } 
\left(\begin{array}{c c c} 1 & 1-\epsilon & 0\\ 0 & 1-\epsilon & 1\end{array}\right).$$ 
Since the order of the expert valuations and the bids is the same in both profiles, an ordinal mechanism behaves identically in all these profiles for every $\epsilon\in (0,1/2)$. Assume that such a mechanism selects the middle option with probability $p$. Then, the approximation ratio of this mechanism is at least the maximum between its approximation ratio for these two profiles. Considering all profiles for $\epsilon\in (0,1/2)$, we get an approximation ratio of at least 
\begin{align*}
\sup_{\epsilon\in (0,1/2)}\left\{ \frac{1}{1-p+2\epsilon p}, \frac{2(1-\epsilon)}{1-p+2(1-\epsilon)p} \right\}=\max\left\{\frac{1}{1-p}, \frac{2}{1+p} \right\} .
\end{align*}
This is minimized to $3/2$ for $p=1/3$.
\end{proof}


\section{Bid-independent mechanisms}\label{sec:bim}
In this section, we consider cardinal mechanisms, but restrict our attention to ones that ignore the bids and base their decisions only on the expert's report. It is convenient to use the expert's view of profiles $\left(\begin{array}{c c c} 1 & x& 0\\ h_a & \ell_a & z_a\end{array}\right)$. Then, a {\em bid-independent} mechanism can be thought of as using univariate functions $g^M$, $f^M$, and $\eta^M$ which indicate the probability of selecting the expert's first, second, and third favorite option when she has value $x\in [0,1]$ for the second favorite option. We drop $a$ and $M$ from notation since the profile view and the mechanism will be clear from context. The next lemma provides sufficient and necessary conditions for bid-independent mechanisms with good approximation ratio.

\begin{lemma}\label{lem:bim-conditions}
Let $M$ be a bid-independent mechanism that uses functions $g$, $f$ and $\eta$. Then $M$ has approximation ratio at most $\rho$ if and only if the inequalities
\begin{align}\label[ineq]{eq:lem-bim-1}
2g(x)+xf(x)& \geqslant 2/\rho\\\label[ineq]{eq:lem-bim-2}
g(x)+(1+x)f(x) &\geqslant (1+x)/\rho
\end{align}
hold for every $x\in [0,1]$.
\end{lemma}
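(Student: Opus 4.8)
The plan is to unfold the definition of approximation ratio into the equivalent per-profile requirement $\E[\SW]\ge \OPT/\rho$ and then pin down exactly which profiles are binding. Working in the expert's view $\left(\begin{smallmatrix}1 & x & 0\\ h_a & \ell_a & z_a\end{smallmatrix}\right)$, I would first record that the social welfare of the expert's $k$-th favourite option is simply its expert value plus its normalized bid; since the no-sale option always carries bid $0$, this reads uniformly as $1+h_a$, $x+\ell_a$, $z_a$ for the first, second and third favourite. Hence $\E[\SW]=g(x)(1+h_a)+f(x)(x+\ell_a)+\eta(x)z_a$ and $\OPT=\max\{1+h_a,\,x+\ell_a,\,z_a\}$. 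The structural fact that drives everything is that $(h_a,\ell_a,z_a)$ is a permutation of $(1,y,0)$ for some low-bid value $y\in[0,1]$, with the $0$ pinned to whichever rank holds the no-sale option; this leaves only six profile shapes, parametrized by $x$ and $y$.

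For the necessity (``only if'') direction I would exhibit the two profiles that produce the claimed inequalities. Taking the expert's favourite to be the high-bidder, the no-sale option second, and the low-bidder third with $y\to 0$, gives $\OPT=2$ and $\E[\SW]=2g(x)+xf(x)$, so $\E[\SW]\ge\OPT/\rho$ is exactly \cref{eq:lem-bim-1}. Taking the no-sale option as favourite, the high-bidder second, and the low-bidder third with $y\to 0$, gives $\OPT=1+x$ and $\E[\SW]=g(x)+(1+x)f(x)$, which is exactly \cref{eq:lem-bim-2}. Since an approximation ratio of at most $\rho$ forces $\E[\SW]\ge\OPT/\rho$ on every profile, both inequalities follow.

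The sufficiency (``if'') direction is the crux. I would assume \cref{eq:lem-bim-1} and \cref{eq:lem-bim-2} for all $x$, together with $g+f+\eta=1$ and $g,f,\eta\ge 0$, and verify $\E[\SW]\ge\OPT/\rho$ for each of the six shapes and all $y\in[0,1]$. Two observations streamline the work: first, for a fixed shape $\E[\SW]$ is affine in $y$ while $\OPT$ is a maximum of affine functions of $y$, hence convex and piecewise linear; so once one fixes which option attains $\OPT$, the surviving inequality is affine in $y$ and its minimum sits at an endpoint $y\in\{0,1\}$ or at a kink of $\OPT$. Second, each resulting endpoint inequality should collapse onto \cref{eq:lem-bim-1} or \cref{eq:lem-bim-2} using only $g+f\le 1$, $\eta\ge 0$ and $f\ge 0$. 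For instance, every shape whose optimum is $2$ reduces to $2g+xf\ge 2/\rho$ because discarding the nonnegative term $y\eta$, or enlarging $xf$ to $(x+y)f$, only helps; every shape whose optimum is $1+x$ or $1+y$ reduces to \cref{eq:lem-bim-2} (or to \cref{eq:lem-bim-1}, when the relevant probability is small) via $g+(1+x)f\le 1+xf$ and the monotonicity of the constraint in $y$.

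The hard part will be organizing this case analysis so that all six shapes, each split further by the location of the maximum and by the sign of the $y$-slope, are visibly dominated by \cref{eq:lem-bim-1} or \cref{eq:lem-bim-2}. The book-keeping is delicate precisely because $g,f,\eta$ are evaluated at the fixed true value $x$ while the competing welfare terms carry the free parameter $y$; with no monotonicity of the mechanism available here (this lemma concerns approximation, not truthfulness), one must repeatedly convert between the two parameters through $g+f+\eta=1$ and nonnegativity. Each individual reduction, such as invoking $2\eta+xf\ge 0$ to pass from \cref{eq:lem-bim-1} to the auxiliary bound $1-(1-x)f\ge 1/\rho$, is routine once the cases are laid out correctly; the challenge is ensuring the enumeration is exhaustive and that no shape escapes both inequalities.
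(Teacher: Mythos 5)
Your proposal is correct, and its necessity half is exactly the paper's argument: the paper derives \cref{eq:lem-bim-1} and \cref{eq:lem-bim-2} from the very profiles you exhibit, namely $(h_a,\ell_a,z_a)=(1,0,0)$ and $(0,1,0)$ in the expert's view (you can take $y=0$ outright; no limit is needed, since a zero bid is feasible). Where you genuinely diverge is sufficiency. You enumerate the six assignments of high-bidder, low-bidder, and no-sale to the expert's three ranks, and for each shape use affinity of $\E[\SW]$ in $y$ against the convex piecewise-linear $\OPT$ to push the constraint to endpoints and kinks. The paper avoids this enumeration entirely via one structural observation: since the third-ranked option's welfare never exceeds $1\leqslant 1+h$, the optimum is always $\max\{1+h,\,x+\ell\}$, and in each of the two resulting cases, after discarding $z\,\eta(x)\geqslant 0$, the ratio $\frac{1+h}{(1+h)g(x)+(x+\ell)f(x)}$ (respectively $\frac{x+\ell}{(1+h)g(x)+(x+\ell)f(x)}$) is monotone in $h$ and in $\ell$ separately, hence maximized at $(h,\ell)=(1,0)$ (respectively $(0,1)$), giving exactly the bounds $\frac{2}{2g(x)+xf(x)}$ and $\frac{1+x}{g(x)+(1+x)f(x)}$. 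In effect the paper relaxes $(h,\ell,z)$ from permutations of $(1,y,0)$ to arbitrary values in $[0,1]$ --- harmless for the upper bound, and tight on genuine profiles for necessity --- which collapses the part you flag as ``the hard part'' into two short chains of inequalities. Your route does close, for what it is worth: all six shapes are dominated as you sketch; e.g., the shape with the low-bidder first, no-sale second, and high-bidder third requires $(1+y)g(x)+xf(x)+\eta(x)\geqslant (1+y)/\rho$, which at $y=0$ follows from \cref{eq:lem-bim-1} (since $g(x)+\tfrac{x}{2}f(x)\geqslant 1/\rho$ and $\tfrac{x}{2}f(x)+\eta(x)\geqslant 0$) and at $y=1$ from \cref{eq:lem-bim-1} plus $\eta(x)\geqslant 0$, with affinity in $y$ covering the interior. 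So your argument is sound but laborious where the paper's monotonicity trick is immediate; the trade-off is that your case-by-case form makes explicit which profile shapes bind, information the paper's relaxation hides.
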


\begin{proof}
Consider the application of $M$ on the profile $\left(\begin{array}{c c c} 1 & x& 0\\ h & \ell & z\end{array}\right)$. If $1+h\geqslant x+\ell$ the optimal welfare is $1+h$ and the approximation ratio is 
\begin{align*}
\frac{1+h}{(1+h)g(x)+(x+\ell)f(x)+z\eta(x)} \leqslant \frac{1+h}{(1+h)g(x)+(x+\ell)f(x)}
\leqslant \frac{2}{2g(x)+xf(x)}.
\end{align*}
The first inequality follows since $z\eta(x)\geqslant 0$, and the second inequality follows since the expression in the middle is non-increasing in $\ell \geq 0$ and non-decreasing in $h \leq 1$. Then, the first inequality of the statement follows as a sufficient condition so that $M$ has approximation ratio at most $\rho$. To see why it is also necessary, observe that the inequalities in the derivation above are tight for the profile with $h=1$, $\ell=0$, and $z=0$.

If $1+h\leqslant x+\ell$ the optimal welfare is $x+\ell$ and the approximation ratio is 
\begin{align*}
\frac{x+\ell}{(1+h)g(x)+(x+\ell)f(x)+z\eta(x)} \leqslant \frac{x+\ell}{(1+h)g(x)+(x+\ell)f(x)}
\leqslant \frac{1+x}{g(x)+(1+x)f(x)}.
\end{align*}
The first inequality follows since again $z\eta(x)\geqslant 0$, while now the second inequality follows since the expression in the middle is non-decreasing in $\ell \leq 1$ and non-increasing in $h \geq 0$. Then, the second inequality of the statement follows as a sufficient condition so that $M$ has approximation ratio at most $\rho$. To see why it is also necessary, observe that the two inequalities in the derivation above are tight for the profile with $h=0$, $\ell=1$, and $z=0$.
\end{proof}

Truthfulness of bid-independent mechanisms in terms of the agents follows trivially (since the bids are ignored). In order to guarantee truthfulness from the expert's side, we will use the characterization of ECh-IC from \cref{lem:ECh-IC} together with additional conditions that will guarantee ESw-IC. These are provided by the next lemma.

\begin{lemma}\label{lem:ESw-IC}
An ECh-IC bid-independent mechanism is truthful if and only if the functions $g$, $f$, and $\eta$ it uses satisfy $g(x)\geqslant f(x')$ and $f(x)\geqslant \eta(x')$ for every pair $x,x'\in (0,1)$.
\end{lemma}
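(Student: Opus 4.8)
The plan is to use the fact that for a bid-independent mechanism truthfulness for the agents is automatic (the bids are ignored) and ECh-IC is assumed, so that ``truthful'' reduces exactly to ESw-IC, i.e.\ to ruling out all deviations of the expert that change the reported order of the three options. I would first set up a uniform description of an arbitrary report: fixing the true profile with values $1\ge x\ge 0$ on the true favourite, second, and third option, any report is specified by a permutation $\pi$ (which true option is declared first/second/third) together with a reported second-favourite value $x'\in[0,1]$. Since the mechanism is bid-independent, such a report yields the expert the utility $v_{1}\,g(x')+v_{2}\,f(x')+v_{3}\,\eta(x')$, where $v_1,v_2,v_3$ are the \emph{true} values of the options declared first, second, and third. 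Truthful reporting corresponds to the identity permutation and $x'=x$, with utility $g(x)+xf(x)$.

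For the ``if'' direction I would argue in two stages, decoupling the choice of order from the choice of the reported middle value. The hypotheses give $g(x')\ge f(x')\ge\eta(x')$ for every single $x'$ (take $x=x'$), so for a \emph{fixed} $x'$ the three probabilities $g(x')\ge f(x')\ge\eta(x')$ and the true values $1\ge x\ge 0$ are similarly ordered, and the rearrangement inequality shows that $v_1 g(x')+v_2 f(x')+v_3\eta(x')$ is maximised by the identity pairing, i.e.\ by keeping the true order, giving $g(x')+xf(x')$. It then remains to optimise over $x'$ within the true order, and this is exactly controlled by ECh-IC: \cref{eq:x-to-xprime-deviation} (from \cref{lem:ECh-IC}) states $g(x)+xf(x)\ge g(x')+xf(x')$ for all $x'$. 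Chaining the two inequalities shows that no report beats the truthful one.

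For the ``only if'' direction I would extract the two required inequalities from two specific \emph{adjacent} swaps performed with the \emph{matched} middle value $x'=x$. Swapping the reported top two options (declaring the true second as favourite and the true favourite as second) gives deviation utility $xg(x)+f(x)$, and ESw-IC against it, $g(x)+xf(x)\ge xg(x)+f(x)$, simplifies (using $x<1$) to $g(x)\ge f(x)$. Symmetrically, swapping the reported bottom two options gives deviation utility $g(x)+x\eta(x)$, and ESw-IC yields $xf(x)\ge x\eta(x)$, i.e.\ $f(x)\ge\eta(x)$, for all $x\in(0,1)$. These are the ``diagonal'' versions of the claimed conditions; to upgrade them to the stated cross conditions I would invoke the monotonicity that comes for free from \cref{lem:ECh-IC} and its corollary ($g$ and $\eta$ non-increasing, $f$ non-decreasing): then $g(x)\ge g(1)\ge f(1)\ge f(x')$ and $f(x)\ge f(0)\ge\eta(0)\ge\eta(x')$ for all $x,x'$, which are precisely $g(x)\ge f(x')$ and $f(x)\ge\eta(x')$.

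The main obstacle I anticipate is organising the continuum of order-changing deviations cleanly: a priori the expert chooses both a permutation and a free middle value $x'$, so naively one faces six orderings each with a one-parameter family of deviations. The idea that makes the proof short is the decoupling described above---rearrangement handles the permutation for each fixed $x'$, and ECh-IC handles the scalar optimisation over $x'$---together with the observation that, thanks to monotonicity, the cross conditions are equivalent to their diagonal special cases, so on the necessity side it suffices to test only the two adjacent swaps with $x'=x$. A minor point to handle carefully is tie-breaking when $x'\in\{0,1\}$ (two options share a value), but this concerns only boundary profiles and does not affect the strict-order analysis on $(0,1)$.
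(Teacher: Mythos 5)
Your proof is correct, but it takes a genuinely different (and tidier) route than the paper's, most notably on sufficiency. The paper proves sufficiency by enumerating the five order-changing reports separately (its Cases 1--5), each case mixing the lemma's cross conditions with the convexity of $x \mapsto g(x)+xf(x)$ inherited from \cref{lem:ECh-IC}; your decoupling --- the rearrangement inequality over permutations at fixed reported level $x'$ (which needs only the diagonal $g(x')\geqslant f(x')\geqslant \eta(x')$, obtained from the hypotheses with $x=x'$), followed by the single ECh-IC inequality $g(x)+xf(x)\geqslant g(x')+xf(x')$ --- subsumes all five cases in one two-step chain and makes the structure (permutation versus level) transparent. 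On necessity the two arguments are mirror images of each other: the paper assumes a cross violation $f(x_1)>g(x_2)$, pushes it via monotonicity to a diagonal violation $f(x^*)>g(x^*)$, and then exhibits the profitable adjacent swap; you derive the diagonal inequalities directly from the two adjacent swaps with matched $x'=x$ and then upgrade to the cross form via monotonicity. One small repair is needed in that upgrade: your chains pass through the endpoints, as in $g(x)\geqslant g(1)\geqslant f(1)\geqslant f(x')$ and $f(x)\geqslant f(0)\geqslant \eta(0)\geqslant \eta(x')$, but the diagonal inequalities were established only on the open interval $(0,1)$, and a monotone function need not attain its one-sided limit at $0$ or $1$; route instead through $m=\max\{x,x'\}\in(0,1)$ (respectively $m=\min\{x,x'\}$), giving $g(x)\geqslant g(m)\geqslant f(m)\geqslant f(x')$ and $f(x)\geqslant f(m)\geqslant \eta(m)\geqslant \eta(x')$, where the monotonicity of $g$, $f$, and $\eta$ is exactly \cref{lem:ECh-IC} and its corollary. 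Your handling of the boundary reports $x'\in\{0,1\}$ (deferring to tie-breaking) matches the paper's own silent treatment, so nothing further is needed there.
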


\begin{proof}
We first show that the first condition is necessary. Assume that the first condition is violated, i.e., $f(x_1)>g(x_2)$ for two points $x_1,x_2\in (0,1)$. If $x_1>x_2$, by the monotonicity of $g$ (due to ECh-IC; see Lemma~\ref{lem:ECh-IC}) we have $g(x_1)\leqslant g(x_2)$ and $f(x_1)>g(x_1)$. Otherwise, by the monotonicity of $f$, we have $f(x_2)\geqslant f(x_1)$ and $f(x_2)>g(x_2)$. In any case, there must exist $x^*\in (0,1)$ such that $f(x^*)>g(x^*)$. Now consider the swap from expert valuation profile $(1,x^*,0)$ to the profile $(x^*,1,0)$. The utility of the expert in the initial true profile is $g(x^*)+x^*f(x^*)$ while her utility at the new profile becomes $f(x^*)+x^*g(x^*)$, which is strictly higher. 

Now, we show that the second condition is necessary. Again, assuming that the second condition is violated, we obtain that there is a point $x^*\in (0,1)$ such that $\eta(x^*)>f(x^*)$. Now, the swap from expert's valuation profile $(1,x^*,0)$ to the profile $(1,0,x^*)$ increases the utility of the expert from $g(x^*)+x^*f(x^*)$ to $g(x^*)+x^*\eta(x^*)$, which is again strictly higher.

In order to show that the condition is sufficient for ECh-IC, we need to consider five possible attempts for valuation swap by the expert. 
\medskip

\noindent{\em Case 1.} Consider the swap from the valuation profile $(1, x, 0)$ to the profile $(1, 0, x')$. The utility of the expert at the new profile is $g(x')+x\eta(x')\leqslant g(0)+\int_0^x{f(t)\ud t} = g(x)+xf(x)$, where the inequality holds due to the fact that $\eta(x')\leqslant f(t)$, for every $t\in [0,x]$. Observe that the RHS of the derivation is the expert's utility at the initial true profile. 
\medskip

\noindent{\em Case 2.} Consider the swap from the valuation profile $(1,x,0)$ to the profile $(x',1,0)$. The utility of the expert at the new profile is $f(x')+xg(x')\leqslant g(x')+xf(x')=g(x')+x'f(x')+(x-x')f(x')\leqslant g(x)+xf(x)$, which is her utility at the initial true profile. The first inequality follows by the condition $g(x')\geqslant f(x)$ of the lemma and the second one is due to the convexity of function $g(x)+xf(x)$. See also the proof of \cref{lem:ECh-IC}.
\medskip

\noindent{\em Case 3.} Consider the swap from the valuation profile $(1,x,0)$ to the profile $(x',0,1)$. The utility of the expert at the new profile is $f(x')+x\eta(x')$, which is at most $g(x)+xf(x)$ due to the conditions of the lemma. 
\medskip

\noindent{\em Case 4.} Consider the swap from the valuation profile $(1,x,0)$ to the profile $(0,x',1)$. The utility of the expert at the new profile is $\eta(x')+xf(x')\leqslant f(x)+xg(x)\leqslant g(x)+xf(x)$, which is her utility at the initial true profile. 
\medskip

\noindent{\em Case 5.} Consider the swap from the valuation profile $(1,x,0)$ to the profile $(0,1,x')$. The utility of the expert at the new profile is $\eta(x')+xg(x')\leqslant f(x')+xg(x')$ and the proof proceeds as in Case 2 above.
\end{proof}

We are now ready to propose our mechanism BIM. Let $\tau=-W\left(-\frac{1}{2e}\right)$, where $W$ is the Lambert function, i.e., $\tau$ is the solution of the equation $2\tau = e^{\tau-1}$. Mechanism BIM is defined as follows:
\begin{minipage}[h!]{0.33\textwidth}
\begin{align*}
g(x) = 
\begin{cases}
\frac{1+\tau}{1+3\tau}, & \!\!\!\! x \in [0,\tau]\\
\frac{2\tau(1+x) e^{1-x}}{1+3\tau} , & \!\!\!\! x \in [\tau,1]
\end{cases}
\end{align*}\end{minipage}
\begin{minipage}[h!]{0.33\textwidth}
\begin{align*}
f(x) = 
\begin{cases}
\frac{\tau}{1+3\tau}, & \!\!\!\! x \in [0,\tau] \\
\frac{1+\tau -2\tau e^{1-x}}{1+3\tau} , & \!\!\!\! x \in [\tau,1]
\end{cases}
\end{align*}
\end{minipage}
\begin{minipage}[h!]{0.33\textwidth}
\begin{align*}
\eta(x) = 
\begin{cases}
\frac{\tau}{1+3\tau}, & \!\!\!\! x \in [0,\tau] \\
\frac{2\tau(1-xe^{1-x})}{1+3\tau} , & \!\!\!\! x \in [\tau,1]
\end{cases}
\end{align*}
\end{minipage}

\vspace{10pt}
\noindent BIM is depicted in \cref{fig:bim}. All functions are constant in $[0,\tau]$ and have (admittedly, counter-intuitive at first glance) exponential terms in $[\tau,1]$. Interestingly, 
BIM is the unique best possible solution to a set of constraints that need to be satisfied by all bid-independent truthful mechanisms, which are derived in the proof of
\cref{thm:bim-lower}. 
Its properties are proved in the next statement.

\begin{figure}[t]
\center\includegraphics[scale=0.65]{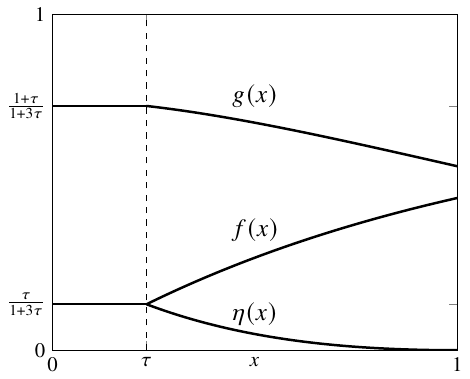}
\caption{A pictorial view of the lottery used by mechanism BIM. $\tau$ is the solution of the equation $2\tau=e^{\tau-1}$.}
\label{fig:bim}
\end{figure}

\begin{theorem}\label{thm:bim}
Mechanism BIM is truthful and has approximation ratio at most $\frac{1-3W\left(-\frac{1}{2e} \right)}{1-W\left(-\frac{1}{2e} \right)} \approx 1.37657$, where $W$ is the Lambert function.
\end{theorem}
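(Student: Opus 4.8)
The plan is to treat the two claims—truthfulness and the approximation bound—separately, in each case reducing to a characterization already established in the excerpt. First I would record the elementary facts that $g,f,\eta\ge 0$, that $g(x)+f(x)+\eta(x)=1$ for all $x$, and that the three functions are continuous at the junction $x=\tau$; all of these rest on the defining identity $2\tau=e^{\tau-1}$ (equivalently $2\tau e^{1-\tau}=1$), which makes the exponential branch agree with the constant branch precisely at $\tau$. For truthfulness from the agents' side there is nothing to prove, since BIM ignores the bids, so everything rides on the expert's side.

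For the expert I would invoke \cref{lem:ECh-IC} and \cref{lem:ESw-IC}. \cref{lem:ECh-IC} asks that $f$ be non-decreasing, which is immediate ($f$ is constant on $[0,\tau]$ and $f'(x)=\tfrac{2\tau e^{1-x}}{1+3\tau}>0$ on $[\tau,1]$), together with the integral identity for $g$, which I would check in differential form as $g'(x)=-x f'(x)$; on the exponential branch this collapses to the clean computation $\frac{d}{dx}\!\left[(1+x)e^{1-x}\right]=-x e^{1-x}$. For \cref{lem:ESw-IC} I would read off the extrema from the monotonicities ($g$ is non-increasing, $f$ non-decreasing, $\eta$ constant then non-increasing), so that the required inequalities $g(x)\ge f(x')$ and $f(x)\ge\eta(x')$ become $\min g=g(1)=\tfrac{4\tau}{1+3\tau}\ge \tfrac{1-\tau}{1+3\tau}=f(1)=\max f$ and $\min f=f(0)=\tfrac{\tau}{1+3\tau}=\max\eta$. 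The latter is an equality; the former reduces to $\tau\ge 1/5$, which I would deduce from the defining equation by noting that $t\mapsto 2t-e^{t-1}$ is increasing on $[0,1]$, is negative at $t=1/5$, and vanishes at $\tau$.

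For the approximation ratio I would apply \cref{lem:bim-conditions} with $\rho=\tfrac{1+3\tau}{1+\tau}$ and verify \cref{eq:lem-bim-1} and \cref{eq:lem-bim-2} on each of $[0,\tau]$ and $[\tau,1]$. The design insight that organizes everything is that the exponential branch was chosen so that \cref{eq:lem-bim-2} holds with equality on all of $[\tau,1]$: plugging in the formulas gives $g(x)+(1+x)f(x)=\tfrac{(1+x)(1+\tau)}{1+3\tau}$ identically. On $[0,\tau]$ both inequalities are elementary, reducing to $x\tau\ge 0$ and $\tau\ge x$. The single genuinely non-trivial check—and what I expect to be the main obstacle—is \cref{eq:lem-bim-1} on $[\tau,1]$, where there is no equality to lean on. I would handle it by setting $\phi(x):=(1+3\tau)\bigl(2g(x)+xf(x)\bigr)-2(1+\tau)$ and proving $\phi\ge 0$: a short computation gives $\phi(\tau)=\tau^2>0$, while the defining equation forces $\phi'(\tau)=(1+\tau)\bigl(1-2\tau e^{1-\tau}\bigr)=0$, and $\phi''(x)=2\tau x e^{1-x}>0$, so $\phi'$ increases from $0$ and $\phi$ increases from a positive value. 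Recognizing that the defining equation is exactly what pins $\phi'(\tau)=0$ is the crux of this step; once the sign pattern of $\phi$ is in hand, the theorem follows by combining the four piecewise checks through \cref{lem:bim-conditions}.
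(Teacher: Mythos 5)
Your proposal is correct and takes essentially the same route as the paper's proof: truthfulness via \cref{lem:ECh-IC} and \cref{lem:ESw-IC} (with agent-side truthfulness trivial by bid-independence), and the ratio via \cref{lem:bim-conditions} with $\rho=\frac{1+3\tau}{1+\tau}$, exploiting that \cref{eq:lem-bim-2} holds with equality on $[\tau,1]$ and that $2g(x)+xf(x)$ is minimized at $x=\tau$ with value $\frac{2+2\tau+\tau^2}{1+3\tau}\geqslant 2/\rho$. The only difference is that you explicitly carry out the ``tedious calculations'' the paper leaves implicit --- the reduction of the ESw-IC conditions to $\tau\geqslant 1/5$ and $f(0)=\eta(0)$, and the rigorous justification of the minimizer via $\phi(\tau)=\tau^2$, $\phi'(\tau)=(1+\tau)\bigl(1-2\tau e^{1-\tau}\bigr)=0$, and $\phi''(x)=2\tau x e^{1-x}>0$ --- and all of these check out.
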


\begin{proof}
Tedious calculations can verify that BIM is truthful. The function $f$ is non-decreasing in $x$ and $g$ is defined exactly as in \cref{eq:lem:ECh-IC}. Hence, ECh-IC follows by \cref{lem:ECh-IC}. ESw-IC follows since $f$, $g$, and $h$ satisfy the conditions of \cref{lem:ESw-IC}.

Now, let $\rho=\frac{1+3\tau}{1+\tau}$. We use the definition of BIM and \cref{lem:bim-conditions} to show the bound on the approximation ratio. If $x\in [0,\tau]$, \cref{eq:lem-bim-1} and \cref{eq:lem-bim-2} are clearly satisfied since $x\geqslant 0$ and $x\leqslant \tau$, respectively. If $x\in [\tau,1]$, we have 
\begin{align*}
2g(x)+xf(x) &= 2 \frac{2\tau(1+x) e^{1-x}}{1+3\tau} + x \frac{1+\tau -2\tau e^{1-x}}{1+3\tau},
\end{align*}
which is minimized for $x=\tau$ (recall that $2\tau=e^{\tau-1}$) at $\frac{2+2\tau+\tau^2}{1+3\tau}\geqslant 2/\rho$.  Hence, \cref{eq:lem-bim-1} holds. Also, \cref{eq:lem-bim-2} can be easily seen to hold with equality.
\end{proof}

We now show that the above mechanism is optimal among all bid-independent truthful mechanisms. The proof exploits the characterization of ECh-IC mechanisms from \cref{lem:ECh-IC}, the characterization of ESw-IC bid-independent mechanisms from \cref{lem:ESw-IC}, and \cref{lem:bim-conditions}.

\begin{theorem}\label{thm:bim-lower}
The approximation ratio of any truthful bid-independent mechanism is at least $\frac{1-3W\left(-\frac{1}{2e}\right)}{1-W\left(-\frac{1}{2e}\right)} \approx 1.37657$, where $W$ is the Lambert function.
\end{theorem}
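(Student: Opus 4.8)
The plan is to use the three characterizations to reduce truthful bid-independent mechanisms to a family of functions $g,f,\eta$ constrained by (i) $f$ non-decreasing together with the integral form $g(x)=g(0)-xf(x)+\int_0^x f(t)\ud t$ of \cref{eq:lem:ECh-IC} (\cref{lem:ECh-IC}); (ii) the swap conditions $g(x)\ge f(x')$ and $f(x)\ge\eta(x')$ for all $x,x'$ (\cref{lem:ESw-IC}); and (iii) $g+f+\eta=1$ with all three nonnegative. Writing $\beta:=1/\rho$, \cref{lem:bim-conditions} says \cref{eq:lem-bim-1} and \cref{eq:lem-bim-2} must hold for every $x$. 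First I would substitute the integral form of $g$ into these two inequalities to get, with $F(x):=\int_0^x f(t)\ud t$, the equivalent conditions $g(0)-\tfrac{x}{2}f(x)+F(x)\ge\beta$ and $g(0)+f(x)+F(x)\ge(1+x)\beta$. Specialising then extracts the handful of scalar facts that control the ratio: the first at $x=0$ gives $g(0)\ge\beta$; the probability constraint at $x=1$ (via the integral form of $g$) gives $g(0)+F(1)\le1$; and the swap condition at $x=x'=0$ gives $f(0)\ge\eta(0)=1-g(0)-f(0)$, i.e.\ $2f(0)+g(0)\ge1$.

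Second, I would determine the extremal mechanism by arguing that at the worst case these constraints are tight. Combining the tight form of \cref{eq:lem-bim-2}, namely $g(x)+(1+x)f(x)=(1+x)\beta$, with the differential consequence $g'(x)=-xf'(x)$ of \cref{eq:lem:ECh-IC} yields the linear ODE $f'(x)+f(x)=\beta$, whose solution is $f(x)=\beta+Ce^{-x}$ with correspondingly $g(x)=D-C(1+x)e^{-x}$ for constants $C\le0$ and $D$ fixed below. Since $f$ must be non-decreasing while \cref{eq:lem-bim-2} is slack near $0$, the extremal $f$ is pinned to the smallest constant permitted by the swap condition on an initial interval $[0,\tau]$ and follows the exponential profile on $[\tau,1]$ — exactly the two-piece shape of BIM, with $g$ constant equal to $\beta$ on $[0,\tau]$. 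The constants and the threshold are then fixed by continuity of $f$ and $g$ at $\tau$, by tightness of the swap condition, and by the binding probability constraint $\eta(1)=0$.

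The crux of the argument is the algebraic collision these tight conditions produce. Writing $f_0$ for the constant value of $f$ on $[0,\tau]$, tightness of \cref{eq:lem-bim-2} at $\tau$ (where $g(\tau)=\beta$) gives $f_0=\tfrac{\tau\beta}{1+\tau}$, while the tight swap condition $f_0=\eta_0=\tfrac{1-\beta}{2}$ gives $f_0=\tfrac{1-\beta}{2}$; together these force $(1-\beta)(1+\tau)=2\beta\tau$. On the other hand, matching the exponential piece to the constant piece at $\tau$ and imposing $\eta(1)=0$ forces $(1-\beta)(1+\tau)=\beta e^{\tau-1}$. Equating the two right-hand sides eliminates $\beta$ and leaves precisely $2\tau=e^{\tau-1}$, i.e.\ $\tau=-W\!\left(-\tfrac{1}{2e}\right)$; back-substitution then yields $\beta=\tfrac{1+\tau}{1+3\tau}$, so $\rho=1/\beta=\tfrac{1+3\tau}{1+\tau}=\tfrac{1-3W(-1/2e)}{1-W(-1/2e)}$.

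The main obstacle is the lower-bound direction itself: showing that every feasible mechanism, not merely the candidate, satisfies $\beta\le\tfrac{1+\tau}{1+3\tau}$. I would make this rigorous by a multiplier argument — integrate the substituted version of \cref{eq:lem-bim-2} against the nonnegative weight $e^{x-1}$ and add suitable nonnegative multiples of the scalar inequalities $g(0)\ge\beta$, $g(0)+F(1)\le1$, and $2f(0)+g(0)\ge1$, choosing the multipliers so that every term involving $g(0)$ and the unknown function $f$ cancels and only a bound on $\beta$ survives. The delicate point I anticipate is that the swap inequality $2f(0)+g(0)\ge1$ is indispensable: I expect that dropping it and using only the approximation and probability constraints yields merely $\beta\le e/(e+1)\approx0.731$ (i.e.\ $\rho\ge1.368$), short of the target. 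Folding it in correctly requires the monotonicity of $f$, which is what propagates the pointwise bound $f(0)\ge\tfrac{1-g(0)}{2}$ across $[0,\tau]$ and is the constraint actually active there; so the real work lies in combining monotonicity with the swap condition, equivalently in proving that the two-piece exponential profile is the \emph{unique} minimizer of $\rho$, which simultaneously establishes the claimed uniqueness of BIM.
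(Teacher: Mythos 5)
Your proposal is correct and takes essentially the same route as the paper's proof: you assemble the same four constraints (the two approximation inequalities of \cref{lem:bim-conditions} rewritten via the integral identity of \cref{lem:ECh-IC}, the normalization $g(0)+\int_0^1 f(t)\ud t\leqslant 1$, and the ESw-IC consequence $g(0)+2f(x)\geqslant 1$ carried across $[0,\alpha]$), and your integration of the substituted inequality against the weight $e^{x-1}$ is precisely the integrating factor behind the paper's differential equation for $\Lambda$, yielding the identical linear combination and the same optimal threshold $\alpha=\tau$ solving $2\tau=e^{\tau-1}$. Your side observation that dropping the swap constraint caps the argument at $\rho\geqslant 1+1/e\approx 1.368$ is accurate and correctly pinpoints why that constraint is indispensable.
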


\begin{proof}
Let $M$ be a bid-independent mechanism that uses functions $g$, $f$, and $h$ to define the probability of selecting the expert's first, second, and third favorite option and has approximation ratio $\rho\geqslant 1$. Let $\alpha$ be any value in $[0,1]$.
	
By the necessary condition \cref{eq:lem:ECh-IC} for ECh-IC in \cref{lem:ECh-IC}, we know that 
\begin{align}\label{eq:mmm}
g(x) &=g(0)-xf(x)+\int_0^x{f(t)\ud t}.
\end{align}
Due to the fact that $f(1)+g(1)\leqslant 1$, we have
\begin{align}\label[ineq]{eq:lb-ineq1}
g(0)+\int_0^1{f(t)\ud t} &\leqslant 1.
\end{align}
By the necessary condition for ESw-IC in \cref{lem:ESw-IC} and since $g$ is non-increasing (by \cref{lem:ECh-IC}), we also have $f(x)\geqslant \eta(x)=1-f(x)-g(x)\geqslant 1-f(x)-g(0)$, i.e., $g(0)+2f(x)\geqslant 1$, for $x\in (0,1)$. Integrating in the interval $(0,\alpha]$, we get
\begin{align}\label[ineq]{eq:lb-ineq2}
\alpha g(0) + 2 \int_0^\alpha{f(t)\ud t} &\geqslant \alpha.
\end{align}
Since, the mechanism is $\rho$-approximate, \cref{lem:bim-conditions} yields
\begin{align}\label[ineq]{eq:lb-ineq3}
g(0) &\geqslant 1/\rho
\end{align}
(by applying \cref{eq:lem-bim-1} with $x=0$) and
\begin{align*}
g(x)+(1+x)f(x) &\geqslant (1+x)/\rho, \forall x\in [\alpha,1].
\end{align*}
Using \cref{eq:mmm}, this last inequality becomes
\begin{align*}
g(0)+f(x)+\int_0^x{f(t)\ud t} &\geqslant (1+x)/\rho, \forall x \in [\alpha,1].
\end{align*}
Now, let $\lambda$ be a continuous function with $\lambda(x)\leqslant f(x)$ in $[\alpha,1]$ such that
\begin{align*}
g(0)+\int_0^\alpha{f(t)\ud t}+ \int_\alpha^x{\lambda(t)dt}+\lambda(x)=(1+x)/\rho.
\end{align*}
Setting $\Lambda(x)=\int_\alpha^x{\lambda(t)dt}$ (clearly, $\Lambda$ is differentiable due to the continuity of $\lambda$ in $[0,1]$), we get the differential equation 
\begin{align*}
g(0)+\int_0^\alpha{f(t)\ud t}+ \Lambda(x)+\Lambda'(x)&=(1+x)/\rho
\end{align*}
which, given that $\Lambda(\alpha)=0$, has the solution
\begin{align*}
\Lambda(x) &= \frac{x}{\rho}-g(0)-\int_0^\alpha{f(t)\ud t}+\left(g(0)-\frac{\alpha}{\rho}+\int_0^\alpha{f(t)\ud t}\right) \exp{(\alpha-x)}
\end{align*}
for $x\in [\alpha,1]$. Hence, 
\begin{align}\label[ineq]{eq:lb-ineq4}
\int_\alpha^1{f(t)\ud t} &\geqslant \Lambda(1) = \frac{1-\alpha e^{\alpha-1}}{\rho}-\left(1-e^{\alpha-1}\right)g(0)-\left(1-e^{\alpha-1}\right)\int_0^\alpha{f(t)\ud t}.
\end{align}
	
Now, by multiplying \cref{eq:lb-ineq1}, \cref{eq:lb-ineq2}, \cref{eq:lb-ineq3}, and \cref{eq:lb-ineq4} by coefficients $2$, $e^{\alpha-1}$, $(2-\alpha)e^{\alpha-1}$, and $2$, respectively, and then summing them, we obtain
\begin{align*}
\rho &\geqslant \frac{2-3\alpha e^{\alpha-1}+2e^{\alpha-1}}{2-\alpha e^{\alpha-1}}.
\end{align*}
Picking $\alpha = -W\left(-\frac{1}{2e}\right)$ (i.e., $\alpha$ satisfies $e^{\alpha-1}=2\alpha$), we get that
\begin{align*}
\rho &\geqslant \frac{2-6\alpha^2+4\alpha}{2-2\alpha^2} = \frac{1+3\alpha}{1+\alpha} =\frac{1-3W\left(-\frac{1}{2e}\right)}{1-W\left(-\frac{1}{2e}\right)}.
\end{align*}
This completes the proof.
\end{proof}


\section{Expert-independent mechanisms}\label{sec:eim}
In this section, we consider mechanisms that depend only on the bids. Now, it is convenient to use the agents' view of profiles $\left[\begin{array}{c c c}h_E & \ell_E & n_E\\1 & y& 0\end{array}\right]$. Then, an expert-independent mechanism can be thought of as using univariate functions $d^M$, $c^M$, and $e^M$ which indicate the probability of selecting the high-bidder, the low-bidder, and the option $\oslash$ in terms of the normalized low-bid $y$. Again, we drop $E$ and $M$ from notation. Following the same roadmap as in the previous section, the next lemma provides sufficient and necessary conditions for expert-independent mechanisms with good approximation ratio.

\begin{lemma}\label{lem:expert-independent-worst-case}
Let $M$ be an expert-independent mechanism that uses functions $d$, $c$, and $e$ with $d(y)=1-c(y)$ and $e(y)=0$ for $y\in [0,1]$. If 
\begin{align}\label[cond]{eq:condition-ei}
\frac{1}{\rho}-\frac{1-1/\rho}{y} \leqslant c(y) \leqslant \frac{2(1-1/\rho)}{2-y}
\end{align}
for every $y\in [0,1]$, then $M$ has approximation ratio at most $\rho$. \cref{eq:condition-ei} is necessary for every $\rho$-approximate expert-independent mechanism.
\end{lemma}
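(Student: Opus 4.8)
The plan is to mirror the structure of the proof of \cref{lem:bim-conditions}, but now working in the agents' view $\left[\begin{array}{ccc} h_E & \ell_E & n_E\\ 1 & y & 0\end{array}\right]$ and exploiting the special form of the lottery assumed in the statement. Since $d(y)=1-c(y)$ and $e(y)=0$ are probabilities, I first record that $c(y)\in[0,1]$. On such a profile the expected welfare of $M$ is $(1-c(y))(1+h_E)+c(y)(y+\ell_E)$, as option $\oslash$ is never selected. For the optimal welfare I would observe that selling to the high-bidder always gives welfare $1+h_E\geqslant 1\geqslant n_E$, so $\oslash$ is never strictly optimal; hence the optimal welfare is $\max\{1+h_E,\,y+\ell_E\}$ and the analysis splits into the two cases $1+h_E\geqslant y+\ell_E$ and $1+h_E\leqslant y+\ell_E$, exactly as in \cref{lem:bim-conditions}. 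In both cases $n_E$ drops out, so I only need $h_E,\ell_E$ ranging over the values allowed by the normalization (one of $h_E,\ell_E,n_E$ equal to $1$ and another equal to $0$).

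In the first case the ratio is $\frac{1+h_E}{(1+h_E)-c(y)\left[(1+h_E)-(y+\ell_E)\right]}$, which equals $\left(1-c(y)\,g/N\right)^{-1}$ with $N=1+h_E$ and gap $g=(1+h_E)-(y+\ell_E)\geqslant 0$. Since $g/N=1-\frac{y+\ell_E}{1+h_E}$ is decreasing in $\ell_E$ and increasing in $h_E$ over the admissible box, and since the corner $(h_E,\ell_E)=(1,0)$ is itself admissible and satisfies the case constraint, the worst case is $h_E=1,\ell_E=0$, giving ratio $\frac{2}{2-c(y)(2-y)}$. Requiring this to be at most $\rho$ is equivalent, by clearing denominators, to the upper bound $c(y)\leqslant \frac{2(1-1/\rho)}{2-y}$; because the maximizing profile is admissible, this inequality is also necessary.

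In the second case the ratio is $\frac{y+\ell_E}{(1-c(y))(1+h_E)+c(y)(y+\ell_E)}$. Writing the denominator as $a+c(y)S$ with $a=(1-c(y))(1+h_E)\geqslant 0$ and $S=y+\ell_E$, the ratio is decreasing in $h_E$ (since $1-c(y)\geqslant 0$) and increasing in $S$ (its $S$-derivative has the sign of $a\geqslant 0$), so the worst case is the admissible corner $h_E=0,\ell_E=1$, yielding ratio $\frac{1+y}{1+c(y)y}$. Requiring this to be at most $\rho$ is equivalent to the lower bound $c(y)\geqslant \frac1\rho-\frac{1-1/\rho}{y}$, again necessary since the profile is admissible. (At $y=0$ the second case forces a welfare tie and this lower bound degenerates to a vacuous $-\infty$, consistent with the statement.)

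Combining the two cases then gives the lemma: the two displayed inequalities are precisely the conditions that the worst-case ratios of Case~1 and Case~2 be at most $\rho$, so they are simultaneously sufficient (they bound the ratio on every profile) and necessary (each is attained by an admissible profile). The only delicate point is justifying that the per-case maxima over $(h_E,\ell_E)$ occur at the corners $(1,0)$ and $(0,1)$ of the feasible set; I expect this monotonicity bookkeeping — together with the repeated use of $c(y)\in[0,1]$ to fix the signs of the relevant derivatives and the verification that each corner meets the case constraint — to be the main, though routine, obstacle.
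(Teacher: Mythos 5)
Your sufficiency direction is correct and is, in substance, the paper's own argument: your corner-maximization over $(h_E,\ell_E)$ is exactly the paper's pair of pointwise bounds $\frac{y+\ell}{1+h}\geqslant y/2$ (tight at $(h,\ell)=(1,0)$) and $\frac{1+h}{y+\ell}\geqslant \frac{1}{1+y}$ (tight at $(h,\ell)=(0,1)$), and your monotonicity bookkeeping, including the check that each corner is admissible and satisfies its case constraint, goes through. The genuine gap is in the necessity direction. \cref{lem:expert-independent-worst-case} asserts that \cref{eq:condition-ei} is necessary for \emph{every} $\rho$-approximate expert-independent mechanism --- including mechanisms with $e(y)>0$ or, equivalently, $d(y)<1-c(y)$ --- whereas your closing step (``each is attained by an admissible profile, hence necessary'') only rules out violations within the restricted class you fixed at the outset, namely mechanisms with $d(y)=1-c(y)$ and $e(y)=0$. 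As written, your proof leaves open the possibility that an expert-independent mechanism placing positive probability on $\oslash$ could be $\rho$-approximate while its function $c$ violates \cref{eq:condition-ei}, so the lemma's second sentence is not established.

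The paper closes this by evaluating a general mechanism with functions $\overline{d}$, $\overline{c}$, $\overline{e}$ on the same two witness profiles, where $n_E=0$, so any mass on $\oslash$ contributes zero welfare, and then using only the inequality $\overline{d}(y^*)\leqslant 1-\overline{c}(y^*)$: at $\left[\begin{array}{c c c}0 & 1 & 0\\ 1 & y^* & 0\end{array}\right]$ the expected welfare is $(1+y^*)\overline{c}(y^*)+\overline{d}(y^*)\leqslant 1+y^*\overline{c}(y^*)$, and at $\left[\begin{array}{c c c}1 & 0 & 0\\ 1 & y^* & 0\end{array}\right]$ it is $2\overline{d}(y^*)+y^*\overline{c}(y^*)\leqslant 2-(2-y^*)\overline{c}(y^*)$, so violating either side of \cref{eq:condition-ei} forces a ratio strictly above $\rho$. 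Patching your proof needs exactly this one-line relaxation from $d=1-c$, $e=0$ to $d\leqslant 1-c$ at the two witness profiles. The distinction is not cosmetic: the full-generality necessity is what later supports the claim in \cref{thm:eim} that $7-4\sqrt{2}$ is optimal among \emph{all} truthful expert-independent mechanisms, not merely among the always-sell ones your argument covers.
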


\begin{proof}
Consider the application of $M$ on the profile with agents' view $\left[\begin{array}{c c c} h & \ell & n\\1 &y & 0\end{array}\right]$. We distinguish between two cases. If $1+h\geqslant y+\ell$, assuming that \cref{eq:condition-ei} is true, the approximation ratio of $M$ is
\begin{align*}
\frac{1+h}{(y+\ell)c(y)+(1+h)(1-c(y))} = \frac{1}{\frac{y+\ell}{1+h}c(y)+1-c(y)}
\leqslant \frac{1}{1-(1-y/2)c(y)} \leqslant \rho.
\end{align*}
The first inequality follows since $\frac{y+\ell}{1+h}\geqslant y/2$ when $y\in [0,1]$, while the second one is essentially the right inequality of \cref{eq:condition-ei}. 

Otherwise, if $1+h\leqslant y+\ell$, again assuming that \cref{eq:condition-ei} is true, the approximation ratio of $M$ is
\begin{align*}
\frac{y+\ell}{(y+\ell)c(y)+(1+h)(1-c(y))} = \frac{1}{c(y)+\frac{1+h}{y+\ell}(1-c(y))} 
\leqslant \frac{1+y}{1+yc(y)} \leqslant \rho.
\end{align*}
The first inequality follows since $\frac{1+h}{y+\ell}\geqslant \frac{1}{1+y}$ when $y\in [0,1]$, while the second one is now the left inequality of  \cref{eq:condition-ei}. 

To see that \cref{eq:condition-ei} is necessary for every mechanism, first consider a mechanism $M'$ that uses functions $\overline{d}$, $\overline{c}$ and $\overline{e}$ such that the function $\overline{c}$ violates the left inequality of \cref{eq:condition-ei}, i.e., $\overline{c}(y^*)<\frac{1}{\rho}-\frac{1-1/\rho}{y^*}$ for some $y^*\in [0,1]$. Then, using this inequality and the fact that $\overline{d}(y^*)\leqslant 1-\overline{c}(y^*)$, the approximation ratio of $M'$ for the profile 
$\left[\begin{array}{c c c}0 & 1& 0\\1&y^*&0\end{array}\right]$ 
is
\begin{align*}
\frac{y^*+1}{(y^*+1)\overline{c}(y^*)+\overline{d}(y^*)} &\geqslant \frac{1+y^*}{1+y^*\overline{c}(y^*)} > \rho.
\end{align*}
Now, assume that $\overline{c}$ violates the right inequality in \cref{eq:condition-ei}, i.e., $\overline{c}(y^*)>\frac{2(1-1/\rho)}{2-y^*}$ for some $y^*\in [0,1]$. Then, using this together with the fact that $\overline{d}(y^*)\leqslant 1-\overline{c}(y^*)$, the approximation ratio of $M'$ for the profile 
$\left[\begin{array}{c c c}1 & 0& 0\\1&y^*&0\end{array}\right]$ 
is 
\begin{align*}
\frac{2}{2\overline{d}(y^*)+y^*\overline{c}(y^*)} &\geqslant \frac{2}{2-(2-y^*)\overline{c}(y^*)}> \rho
\end{align*}
as desired.
\end{proof}

\begin{figure}[t]
\centering
\includegraphics[scale=0.65]{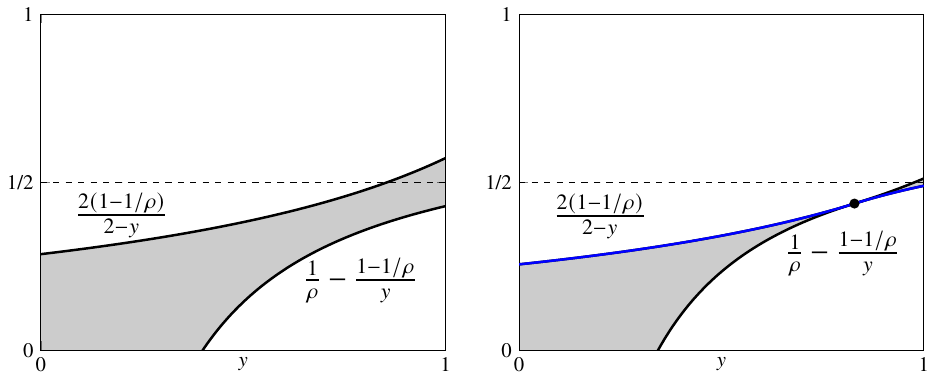}
\caption{Pictorial views of the statement of \cref{lem:expert-independent-worst-case} for $\rho=7/5$ (left) and $\rho=7-4\sqrt{2}$ (right). The gray area corresponds to the available space that allows for the definition of (the function $c$ of) expert-independent mechanisms with approximation ratio at most $\rho$. The truthfulness requirements are obtained if $c$ is non-decreasing (for BCh-IC; see Lemma~\ref{lem:BCh-IC}) and takes values not higher than $1/2$ (for BSw-IC; see Lemma~\ref{lem:BSw-IC}). The blue line at the right is the $c$-function used by our mechanism EIM, which has all the above properties for $\rho=7-4\sqrt{2}$.}
\label{fig:eim}
\end{figure}

\cref{fig:eim} shows the available space (gray area) for the definition of the function $c(y)$, so that the corresponding mechanism has an approximation ratio of at most $\rho$. It can be easily verified that the value $\rho=7-4\sqrt{2}\approx 1.3431$ (see the right part of \cref{fig:eim}) is the minimum value for which the LHS of \cref{eq:condition-ei} in \cref{lem:expert-independent-worst-case} is smaller than or equal to the RHS so that a function satisfying \cref{eq:condition-ei} does exist.

Our aim now is to define an expert-independent truthful mechanism achieving the best possible approximation ratio of $\rho=7-4\sqrt{2}$. Clearly, truthfulness for the expert follows trivially (since the expert's report is ignored). We restrict our attention to the design of a mechanism that never selects option $\oslash$, and thus $d(y)=1-c(y)$ for every $y\in [0,1]$. \cref{lem:BCh-IC} and \cref{lem:expert-independent-worst-case} guide this design as follows. In order to be BCh-IC and $\rho$-approximate, our mechanism should use a non-decreasing function $c(y)$ in the space available by \cref{eq:condition-ei}. Still, we need to guarantee BSw-IC; the next lemma gives us the additional sufficient (and necessary) condition.

\begin{lemma}\label{lem:BSw-IC}
A BCh-IC expert-independent mechanism is truthful if and only if $d(1)\geqslant c(1)$. 
\end{lemma}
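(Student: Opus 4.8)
The plan is to reduce truthfulness of an expert-independent mechanism to the single inequality $d(1)\geqslant c(1)$ by checking all possible bid-swap deviations. Recall we already have BCh-IC in hand (the functions satisfy Myerson's monotonicity from \cref{lem:BCh-IC}), so what remains is exactly BSw-IC: no agent should want to change the \emph{order} of the bids. In the agents' view $\left[\begin{array}{c c c}h_E & \ell_E & n_E\\1 & y & 0\end{array}\right]$, a bid swap is a deviation in which the low-bidder raises her report above the high-bidder's, or the high-bidder lowers her report below the other's; in either case the normalized profile changes from having low-bid $y$ to a new profile with some low-bid $y'\in[0,1]$, but with the \emph{roles} of the two agents exchanged. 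The key observation is that after normalization the deviating agent's new utility is governed by the mechanism's values at $y'$ with the high/low labels swapped relative to her, so the comparison always comes down to evaluating $d$ and $c$ at the two endpoints of the relevant monotone functions.

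First I would set up the two deviation types explicitly. For the low-bidder who currently bids (normalized) $y$ with selection probability $c(y)$: by increasing her bid she becomes the high-bidder in some new profile with normalized low-bid $y'$, where the other agent's bid is now the smaller one. Her gain in selection probability is $d(y')$ in the new profile versus $c(y)$ in the old, but one must account for the Myerson payments — since we are only checking incentive compatibility given that BCh-IC already holds, it suffices to compare the interim allocation-minus-payment across the order boundary. Using the payment formula quoted after \cref{lem:BCh-IC}, the utility of an agent is determined by integrating her allocation probability up to her bid, so the \emph{only} place a genuine discontinuity (and hence a possible profitable swap) can occur is at the tie point where the two bids are equal, i.e.\ at $y=1$. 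Away from $y=1$ the monotonicity from \cref{lem:BCh-IC} already rules out beneficial deviations within each ordering region.

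Thus the crux is the boundary $y\to 1$: as the two bids approach equality, the high-bidder's allocation tends to $d(1)$ and the low-bidder's tends to $c(1)$, and an agent sitting infinitesimally below the tie can profitably jump to just above it precisely when $d(1)>c(1)$ would be \emph{violated} — that is, a swap is profitable exactly when $c(1)>d(1)$. Conversely, if $d(1)\geqslant c(1)$, then crossing the tie never increases the (post-payment) allocation advantage, so together with BCh-IC no swap helps. To make this rigorous I would write the utility of the deviating agent as a function of her reported bid, show it is continuous and piecewise-determined by the integral of her allocation, and verify that the left and right limits at the tie point satisfy the no-jump condition iff $d(1)\geqslant c(1)$. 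For sufficiency one checks both the low-bidder's upward swap and the high-bidder's downward swap and sees that each is covered by the same single inequality; for necessity one exhibits, when $c(1)>d(1)$, an explicit pair of profiles straddling the tie at which the low-bidder strictly gains by swapping.

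I expect the main obstacle to be bookkeeping at the tie point: one must handle the payments correctly (the Myerson payment depends on the full allocation curve, not just its value at $y=1$) and confirm that, under the convention $A\succ B\succ\oslash$ for breaking ties, the limiting allocations really are $d(1)$ for the high-bidder and $c(1)$ for the low-bidder with no hidden asymmetry between the two swap directions. Once the limits are pinned down, the reduction to $d(1)\geqslant c(1)$ is immediate, so the difficulty is entirely in justifying that the tie is the \emph{only} place BSw-IC can fail.
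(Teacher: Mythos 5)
Your proposal is correct and follows essentially the same route as the paper: both reduce BSw-IC to monotonicity of the agent's full allocation curve across the bid-order boundary (relying on Myerson's payment identity from \cref{lem:BCh-IC}), and then use the fact that $c$ is non-decreasing and $d$ non-increasing to collapse the cross-boundary comparison $c(y)\leqslant d(y')$ for all $y,y'$ to the single endpoint condition $d(1)\geqslant c(1)$. Your localization of the binding case at the tie point, together with the explicit straddling profiles for necessity, is just a more spelled-out version of the paper's two-line argument.
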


\begin{proof}
Consider an attempted bid swap according to which the low-bidder increases her normalized bid of $y$ so that it becomes the high-bidder and the normalized bid of the other agent is $y'$. Essentially, this attempted bid swap modifies the initial profile $\left[\begin{array}{c c c}h & \ell& n\\1 & y& 0\end{array}\right]$ to $\left[\begin{array}{c c c} \ell & h & n\\1 & y' & 0\end{array}\right]$. The deviating agent corresponds to the middle column in the initial profile and has probability $c(y)$ of being selected. In the new profile, she corresponds to the first column, and has probability $d(y')$ of being selected. So, the necessary and sufficient condition so that BSw-IC is guaranteed is $c(y)\leqslant d(y')$ for every $y,y'\in [0,1]$. Since, by \cref{lem:BCh-IC}, $c$ and $d$ are non-decreasing and non-increasing, respectively, this condition boils down to $d(1)\geqslant c(1)$.
	
The case in which the high-bidder decreases her bid so that it gets a normalized value of $y'$ is symmetric.
\end{proof}

\noindent For mechanisms with $d(y)=1-c(y)$ for $y\in [0,1]$, the condition of Lemma~\ref{lem:BSw-IC} becomes $c(1)\leq 1/2$.

We are ready to propose our mechanism EIM, which uses the following functions. For $\rho=7-4\sqrt{2}$,
\begin{align*}
c(y) &= 
\begin{cases}
\frac{2(1-1/\rho)}{2-y}, & y\in [0,\frac{3-\rho}{2}]\\
\frac{1}{\rho}-\frac{1-1/\rho}{y}, & y\in [\frac{3-\rho}{2},1]
\end{cases}
\end{align*}
and $d(y)=1-c(y)$ for $y\in [0,1]$.

Essentially, EIM uses the blue line in \cref{fig:eim}, which consists of the curve that upper-bounds the gray area up to point $\frac{3-\rho}{2} = 2\sqrt{2}-2$ and the curve that lower-bounds the gray area after that point. The properties of mechanism EIM are summarized in the next statement. It should be clear though that the statement holds for every mechanism that uses a non-decreasing function in the gray area that is below $1/2$. Given the discussion about the optimality of $\rho=7-4\sqrt{2}$ above, all these mechanisms are optimal within the class of expert-independent mechanisms.

\begin{theorem}\label{thm:eim}
Mechanism EIM is truthful and has approximation ratio at most $7-4\sqrt{2} \approx 1.3431$. This ratio is optimal among all truthful expert-independent mechanisms.
\end{theorem}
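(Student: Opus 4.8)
The plan is to prove the three assertions of the statement—truthfulness of EIM, the $7-4\sqrt{2}$ upper bound on its ratio, and optimality of this ratio—one at a time, in each case reducing the claim to one of the three characterizations established above. Truthfulness for the expert is immediate, since EIM ignores the expert's report entirely. For the agents I would invoke \cref{lem:BCh-IC}: because $d=1-c$, it suffices to show that $c$ is non-decreasing on $[0,1]$. Each of the two pieces of $c$ is visibly increasing (the first since $2-y$ decreases, the second since $\frac{1-1/\rho}{y}$ decreases), so the only real check is continuity at the breakpoint $y_0=\frac{3-\rho}{2}=2\sqrt{2}-2$. Equating the two pieces at $y_0$ reduces to $\rho^2-14\rho+17=0$, whose relevant root is exactly $\rho=7-4\sqrt{2}$; this is where the magic constant enters. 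Truthfulness is then completed via \cref{lem:BSw-IC}, which requires only $c(1)\leqslant 1/2$: since $c(1)=2/\rho-1\approx 0.489$ and $c$ is non-decreasing, the condition holds.

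For the approximation ratio I would apply \cref{lem:expert-independent-worst-case} directly, so that it remains only to verify that $c(y)$ lies inside the band of \cref{eq:condition-ei} for every $y\in[0,1]$ at $\rho=7-4\sqrt{2}$. By construction $c$ equals the upper bound of \cref{eq:condition-ei} on $[0,y_0]$ and the lower bound on $[y_0,1]$, so on each subinterval one inequality holds with equality and the surviving inequality is precisely the statement that the lower bound never exceeds the upper bound, i.e. $\frac{1}{\rho}-\frac{1-1/\rho}{y}\leqslant \frac{2(1-1/\rho)}{2-y}$ on $[0,1]$. Clearing the (positive) denominators turns this into a quadratic inequality in $y$ whose vertex lies at $y_0\in[0,1]$ and whose discriminant vanishes exactly at $\rho=7-4\sqrt{2}$; thus the band is nonempty for all $y$ (pinching to a single point at $y_0$) and $c$ stays inside it, giving ratio at most $\rho$.

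Optimality is the cleanest step and needs no new work. \cref{lem:expert-independent-worst-case} states that \cref{eq:condition-ei} is necessary for every $\rho$-approximate expert-independent mechanism; its necessity argument uses only $\overline{d}\leqslant 1-\overline{c}$ (hence covers mechanisms that select $\oslash$ with positive probability) and never invokes truthfulness. Therefore any $\rho$-approximate expert-independent mechanism must admit a function $c$ obeying \cref{eq:condition-ei} for all $y$, which in turn forces the lower bound of \cref{eq:condition-ei} to lie below the upper bound for every $y\in[0,1]$. By the same quadratic-and-discriminant computation as above, this is possible only when $\rho\geqslant 7-4\sqrt{2}$, so no (truthful) expert-independent mechanism can beat $7-4\sqrt{2}$, and EIM is optimal.

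The single technical crux, shared by the upper bound and the optimality, is the feasibility analysis at the breakpoint. The key observation is that the continuity condition for $c$ and the tangency condition for the band of \cref{eq:condition-ei} are literally the same equation $\rho^2-14\rho+17=0$: the value $\rho=7-4\sqrt{2}$ is simultaneously the value at which the two pieces of $c$ join continuously and the smallest value at which the two bounds of \cref{eq:condition-ei} admit a common point for every $y\in[0,1]$, with tangency at $y=2\sqrt{2}-2$. Once this algebraic coincidence is pinned down, everything else—monotonicity of $c$, the scalar inequality $c(1)\leqslant 1/2$, and the reduction of each approximation-ratio case to a single active inequality—is routine bookkeeping.
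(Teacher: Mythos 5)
Your proposal is correct and takes essentially the same route as the paper's (largely implicit) proof: truthfulness for the agents via \cref{lem:BCh-IC} together with the $d(1)\geqslant c(1)$ condition of \cref{lem:BSw-IC}, the approximation ratio via the sufficiency direction of \cref{lem:expert-independent-worst-case} (with $c$ gliding along the upper then the lower boundary of \cref{eq:condition-ei}), and optimality via the necessity direction of that lemma plus feasibility of the band. The only difference is that you explicitly carry out the algebra the paper dismisses as ``easily verified''---namely that both the continuity of $c$ at $y_0=\frac{3-\rho}{2}$ and the nonemptiness of the band reduce to the quadratic $y^2+(\rho-3)y+2(\rho-1)\geqslant 0$ with vertex $y_0$ and discriminant $\rho^2-14\rho+17$, vanishing exactly at $\rho=7-4\sqrt{2}$---which fills in the paper's omitted computation rather than deviating from its argument.
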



\section{Beyond expert-independent mechanisms}\label{sec:beyond}
In this section, we present a template for the design of even better truthful mechanisms, compared to those presented in the previous sections. The template strengthens expert-independent mechanisms by exploiting a single additional bit of information that allows to distinguish between profiles that have the same (normalized) bid values.

We denote by $\calT$ the set of mechanisms that are produced according to our template. In order to define a mechanism $M\in \calT$, it is convenient to use the agents' view of a profile as $\left[\begin{array}{c c c} h & \ell & n\\ 1 & y & 0\end{array}\right]$. We partition the profiles of $\mathcal{D}$ into two categories. Category $T1$ contains all profiles with $\ell>h$ or with $\ell=h$ such that the tie between the expert valuations $\ell$ and $h$ is resolved in favor of the low-bidder. All other profiles belong to category $T2$.

For each profile in category $T1$, mechanism $M$ selects the low-bidder with probability $c(y,T1)$,which is non-decreasing in $y$, and the high-bidder with probability $1-c(y,T1)$. For each profile in category $T2$, mechanism $M$ selects the low-bidder with probability $0$, and the high-bidder with probability $1$. Different mechanisms following our template can be defined using different functions $c(y,T1)$. The mechanisms of the template ignore neither the bids nor the expert's report; still, we can show that they are truthful.

\begin{lemma}\label{lem:template-truthfulness}
Every mechanism $M\in \calT$ is truthful.
\end{lemma}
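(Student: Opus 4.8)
The plan is to establish truthfulness of a template mechanism $M\in\calT$ by checking each of the four incentive-compatibility conditions (BCh-IC, BSw-IC, ECh-IC, ESw-IC) separately, exploiting the fact that the category ($T1$ vs.\ $T2$) is determined by the \emph{relative order} of the expert valuations $h$ and $\ell$, not their exact magnitudes. First I would handle the agents' side. Since within each category the selection probabilities depend on the bids exactly as an expert-independent mechanism does --- probability $c(y,T1)$ (resp.\ $0$) on the low-bidder and the complement on the high-bidder, with $e\equiv 0$ --- BCh-IC follows directly from \cref{lem:BCh-IC}: within a fixed category $c(\cdot,T1)$ is non-decreasing and $0$ is trivially so, hence $d$ is non-increasing. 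The subtle point is that a bid change could in principle move the profile across the $T1/T2$ boundary; but the category is defined purely by the expert's valuations $h,\ell$ and by the tie-breaking rule, so a unilateral bid \emph{level change} (BCh) keeps the category fixed, and the standard Myerson payment within each category makes these deviations unprofitable.

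Next I would address bid swaps (BSw-IC). Here the key observation is again that swapping the bid order does not alter which expert valuation is attached to which option, hence does not change the category. So for a profile in $T1$, I would apply the argument of \cref{lem:BSw-IC} verbatim: the condition reduces to $d(1,T1)\geq c(1,T1)$, which I would impose (or note is implied by restricting $c(y,T1)\le 1/2$ as the paper does for EIM). For a profile in $T2$, the high-bidder is selected with probability $1$ and the low-bidder with probability $0$, so the low-bidder can only gain by becoming the high-bidder; but after the swap she is the high-bidder in a $T2$ profile and is selected with probability $1$, which is exactly what the honest outcome would give to whoever bids highest, so with the appropriate payment the deviation is not profitable. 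I would phrase this via the same $c(y)\le d(y')$ inequality as in \cref{lem:BSw-IC}.

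The expert's side is where the genuinely new reasoning lies, and I expect it to be the main obstacle. For ECh-IC, note that a level change by the expert alters the \emph{value} of her second-favourite option but not the order of her valuations; crucially it can change whether $h\gtrless\ell$ only when it pushes the second value past the first, which is an \emph{order} change and thus belongs to ESw, not ECh. So a pure level change keeps the profile in its category, and within a category the selection probabilities are constant in the expert's reported intensities (the template's probabilities depend only on $y$ and the category, never on the cardinal values $h,\ell,n$). Hence the expert's expected value is unaffected by any level change, giving ECh-IC trivially. The delicate case is ESw-IC: here I must check that the expert cannot gain by reordering her valuations, which is exactly what determines the category. I would enumerate the possible swaps and show that moving from $T1$ to $T2$ (or vice versa) never increases the expert's expected value, using that in $T1$ the low-bidder (whom the expert may prefer) receives positive probability $c(y,T1)$, whereas in $T2$ all probability goes to the high-bidder. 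The heart of the argument is that the template is engineered so the expert is already treated favourably whenever her preferred option is the low-bidder (category $T1$), so she has no incentive to misreport into $T2$; and in $T2$ her favourite option already coincides with the guaranteed winner, so she has no incentive to misreport into $T1$. I would make this precise by comparing the expert's utility $c(y,T1)\cdot(\text{value on low-bidder})+(1-c(y,T1))\cdot(\text{value on high-bidder})$ against the $T2$ utility for each admissible swap, relying on the tie-breaking convention $A\succ B\succ\oslash$ to pin down boundary cases.
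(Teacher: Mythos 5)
Your treatment of BSw-IC contains a genuine error, and it is the one step where the template differs from plain expert-independent mechanisms. Your key claim --- that ``swapping the bid order does not alter which expert valuation is attached to which option, hence does not change the category'' --- is false. The categories $T1$/$T2$ are defined in the \emph{agents' view}, by comparing the expert's value $\ell$ for the low-bidder with her value $h$ for the high-bidder. A bid swap exchanges the high/low roles while the expert's values remain attached to the options $A$, $B$, $\oslash$, so the comparison $\ell \gtrless h$ flips: a bid swap always moves a $T1$ profile to $T2$ and vice versa. This flip is precisely the paper's key observation. Because you miss it, you reduce to \cref{lem:BSw-IC} ``verbatim'' and impose $d(1,T1)\geqslant c(1,T1)$, i.e.\ $c(1,T1)\leqslant 1/2$ --- a condition that is \emph{not} part of the template $\calT$ and is violated by both mechanisms the lemma must cover: $R$ and $D$ of \cref{thm:R} have $c(1,T1)=1$. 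So your proof, even if patched, establishes the lemma only for a subclass of $\calT$ that excludes exactly the mechanisms of interest. Your $T2$ analysis fails for the same reason: after the low-bidder's swap the new profile is $T1$, not $T2$, and she is selected with probability $1-c(y',T1)$, not $1$.

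The correct argument exploits the category flip to get global monotonicity of the deviator's selection probability in her own bid \emph{without} any cap on $c$: a low-bidder in $T1$ who overtakes becomes the high-bidder of a $T2$ profile, moving from $c(y,T1)$ to $1$; a low-bidder in $T2$ moves from $0$ to $1-c(y',T1)\geqslant 0$; a high-bidder in $T1$ who drops below the other bid moves from $1-c(y,T1)$ to $0$; a high-bidder in $T2$ moves from $1$ to $c(y',T1)\leqslant 1$. In all four cases the probability moves in the direction of the bid change using only $0\leqslant c(y,T1)\leqslant 1$ and monotonicity of $c(\cdot,T1)$, so Myerson payments give BSw-IC. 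The remainder of your proposal is sound and matches the paper: BCh-IC holds because a bid level change preserves the category and $c(\cdot,T1)$ is non-decreasing; ECh-IC holds because the lottery never depends on the expert's cardinal intensities; and your planned ESw-IC comparison is exactly the paper's one-line computation of $h+(\ell-h)c(y,T1)$ against $h$ across the category flip. Ironically, in that last step you correctly observe that an \emph{expert's} swap flips the category --- the very logic you needed, and omitted, for bid swaps.
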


\begin{proof}
We first show that $M$ is truthful for the agents. BCh-IC follows easily by \cref{lem:BCh-IC}, since $c(y,T1)$ and $c(y,T2)$ are non-decreasing in $y$. To show BSw-IC, notice that a bid swap attempt from a profile of category $T1$ creates a profile of category $T2$ and vice versa. This involves either the high-bidder who decreases her bid to become the low-bidder in the new profile, or the low-bidder who increases her bid to become the high-bidder in the new profile. In both cases, the increase or decrease in the selection probability according to $M$ follows the increase or decrease of the deviating bid.

To show that $M$ is truthful for the expert, first observe that according to the expert's view, the lottery uses constant functions $f$, $g$, and $h$ in terms of her value for her second favorite option. Hence, \cref{lem:ECh-IC} implies ECh-IC. To show ESw-IC, observe again that an expert's report swap attempt from a profile of category $T1$ creates a profile of category $T2$ and vice versa. The expected utility that $M$ yields to the expert in the initial profile $\left[\begin{array}{c c c} h & \ell & n\\ 1 & y & 0\end{array}\right]$ is $\ell c(y,T1)+h(1-c(y,T1))=h+(\ell-h)c(y,T1)\geqslant h$ if it is of category $T1$ and $h+(\ell-h)c(y,T2)=h$ if it is of category $T2$. After the deviation, the utility of the expert becomes $\ell c(y,T1)+h(1-c(y,T1))=h+(\ell-h)c(y,T1)\leqslant h$ if the new profile is of category $T1$ and $h+(\ell-h)c(y,T2)=h$ if it is of category $T2$. Hence, such a swap attempt is never profitable for the expert.
\end{proof}

The next lemma is useful in proving bounds on the approximation ratio of mechanisms in $\calT$.

\begin{lemma}\label{lem:template-approx}
Let $M$ be a mechanism of $\calT$ and $\rho\geqslant 1$ be such that the function $c(y,T1)$ used by $M$ satisfies 
$$\frac{1}{\rho}-\frac{1-1/\rho}{y} \leqslant c(y,T1) \leqslant \frac{1-1/\rho}{1-y}.$$
Then, $M$ has approximation ratio at most $\rho$.
\end{lemma}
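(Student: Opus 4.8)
The plan is to bound the ratio separately on the two categories $T1$ and $T2$, working in the agents' view $\left[\begin{array}{ccc} h & \ell & n\\ 1 & y & 0\end{array}\right]$. Here selecting the high-bidder yields social welfare $1+h$, selecting the low-bidder yields $y+\ell$, and selecting $\oslash$ yields $n$, so the optimal welfare is the maximum of these three. Since the expert's values are normalized so that $n\le 1$ and $h\ge 0$, we always have $1+h\ge 1\ge n$; hence $\oslash$ is never a strictly optimal option and can be dropped when computing $\OPT$. First I would dispose of category $T2$: there $h\ge \ell$, so together with $y\le 1$ we get $1+h\ge y+\ell$ and $1+h\ge n$, making the high-bidder welfare-optimal. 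As $M$ selects the high-bidder with probability $1$ on $T2$, its expected welfare equals $\OPT$ and the ratio is $1$; thus $T2$ contributes no constraint.

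The substance is category $T1$, where $\ell\ge h$ and $M$ yields expected welfare $c(y+\ell)+(1-c)(1+h)$ with $c:=c(y,T1)\in[0,1]$. I would split $T1$ according to which sell-option is welfare-optimal. In the subcase $1+h\ge y+\ell$ we have $\OPT=1+h$ and the ratio is $\tfrac{1+h}{c(y+\ell)+(1-c)(1+h)}$. This is non-increasing in $\ell$ (the denominator grows with $\ell$ while the numerator is fixed), so within $T1$ the worst $\ell$ is the smallest feasible value $\ell=h$. Substituting $\ell=h$ collapses the denominator to $(1+h)-c(1-y)$, and the ratio $\tfrac{1+h}{(1+h)-c(1-y)}$ is decreasing in $h$; hence the extremal instance is $h=\ell=0$ (forcing $n=1$ by normalization), where the ratio equals $\tfrac{1}{1-c(1-y)}$. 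Requiring this to be at most $\rho$ rearranges exactly to the upper bound $c\le\frac{1-1/\rho}{1-y}$.

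In the complementary subcase $1+h\le y+\ell$ we have $\OPT=y+\ell$ and the ratio is $\tfrac{y+\ell}{c(y+\ell)+(1-c)(1+h)}$, which is non-increasing in $h$ (since $1-c\ge 0$), pushing $h=0$, and increasing in $\ell$ after writing the denominator as $c(y+\ell)+(1-c)$, pushing $\ell=1$. Both extremizers are consistent with the normalization and keep us inside $T1$ and inside this subcase. The worst ratio is then $\tfrac{1+y}{1+cy}$, and demanding it be at most $\rho$ rearranges precisely to the lower bound $c\ge\frac{1}{\rho}-\frac{1-1/\rho}{y}$. Combining the two subcases with the trivial $T2$ bound gives $\rho(M)\le\rho$.

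The main point requiring care is justifying that these boundary profiles are genuinely the worst cases over the constrained region $\{\ell\ge h\}$ under the normalization "one expert value is $1$ and another is $0$", and verifying that each claimed extremizer simultaneously lies in $T1$ and in the relevant subcase (so that $\OPT$ takes the assumed value). Once the monotonicity of the ratio in $h$ and in $\ell$ is pinned down — mirroring the reductions used in \cref{lem:bim-conditions} and \cref{lem:expert-independent-worst-case} — the two inequalities of the statement fall out directly, which is why the bounds on $c(y,T1)$ take exactly this form.
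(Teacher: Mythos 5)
Your proof is correct and takes essentially the same route as the paper's: dismiss category $T2$ (ratio $1$ since the high-bidder is optimal there), split $T1$ by whether $1+h\geqslant y+\ell$, and bound the ratio in the two subcases by $\frac{1}{1-(1-y)c(y,T1)}$ and $\frac{1+y}{1+yc(y,T1)}$, which rearrange to the two sides of the stated condition. Your explicit extremizer reductions ($\ell\downarrow h$ then $h\downarrow 0$, and $h\downarrow 0$ then $\ell\uparrow 1$) are just an unpacked version of the paper's one-line inequalities $\frac{y+\ell}{1+h}\geqslant y$ (using $\ell\geqslant h$) and $\frac{1+h}{y+\ell}\geqslant\frac{1}{1+y}$, and your checks that the extremizers stay in $T1$ and in the relevant subcase are sound.
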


\begin{proof}
Clearly, the approximation ratio of $M$ in profiles of category $T2$ is always $1$ since the mechanism takes the optimal decision of selecting the high-bidder with probability $1$.

Now, consider a profile $\left[\begin{array}{c c c} h & \ell & n \\ 1 & y & 0\end{array}\right]$ of category $T1$, i.e., $\ell\geqslant h$. We distinguish between two cases. If $1+h\geqslant y+\ell$, then the approximation ratio of $M$ is
\begin{align*}
\frac{1+h}{(y+\ell)c(y,T1)+(1+h)(1-c(y,T1))} &= \frac{1}{1-c(y,T1)+\frac{y+\ell}{1+h}c(y,T1)}
\leqslant \frac{1}{1-(1-y)c(y,T1)} \leqslant \rho.
\end{align*}
The first inequality follows since $\frac{y+\ell}{1+h}\geqslant y$ when $y\in [0,1]$ and $\ell\geqslant h\geqslant 0$, while the second one is due to the right inequality in the condition of the lemma.

Otherwise, if $1+h\leqslant y+\ell$, the approximation ratio of $M$ is
\begin{align*}
\frac{y+\ell}{(y+\ell)c(y,T1)+(1+h)(1-c(y,T1))} &= \frac{1}{c(y,T1)+\frac{1+h}{y+\ell}(1-c(y,T1))}
\leqslant \frac{1+y}{1+yc(y,T1)} \leqslant \rho.
\end{align*}
The first inequality follows since $\frac{1+h}{y+\ell}\geqslant \frac{1}{1+y}$ when $y\in [0,1]$ and $h\geqslant \ell\geqslant 0$; the second one is due to the left inequality in the condition of the lemma.
\end{proof}

\begin{figure}[t]
\centering
\includegraphics[scale=0.65]{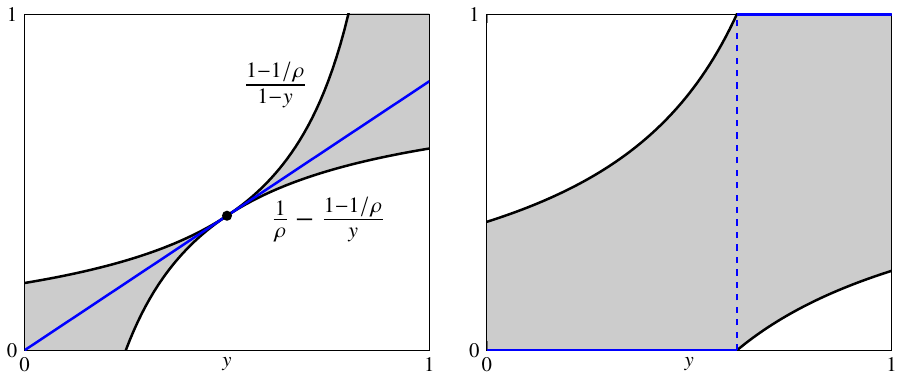}
\caption{Pictorial views of the statement in \cref{lem:template-approx} for $\rho=5/4$ (left) and $\rho=\phi$ (right). The gray area in both plots corresponds to the available space for the definition of $\calT$ mechanisms (e.g., mechanisms $R$ and $D$ which are depicted by the blue lines).}
\label{fig:template}
\end{figure}

The conditions in the statement of \cref{lem:template-approx} are depicted in the left (for $\rho=5/4$) and right plot (for $\rho=\phi$) of \cref{fig:template}. The gray area represents the available space for the definition of (the non-decreasing) function $c(y,T1)$ that a mechanism of $\calT$ should use on profiles of category $T1$ so that its approximation ratio is at most $\rho$.

These plots explain the definition of the next two mechanisms that follow our template: the randomized mechanism $R$ and the deterministic mechanism $D$. For each profile of category $T1$, mechanisms $R$ and $D$ use the functions 

\begin{center}
\begin{minipage}[h!]{0.4\textwidth}
\begin{align*}
c^R(y,T1) & = \frac{4y}{5}
\end{align*}
\end{minipage}
\begin{minipage}[h!]{0.44\textwidth}
\begin{align*}
c^D(y,T1) & = 
\begin{cases}
0, & y\in [0,1/\phi)\\
1, & y\in [1/\phi,1]
\end{cases}
\end{align*}
\end{minipage}
\end{center}

\noindent
corresponding to the blue lines in the left and right plots of \cref{fig:template}, respectively; $\phi=\frac{1+\sqrt{5}}{2}\approx 1.618$ is the golden ratio. Their properties are as follows.

\begin{theorem}\label{thm:R}
Mechanisms $R$ and $D$ are $5/4$- and $\phi$-approximate truthful mechanisms, respectively.
\end{theorem}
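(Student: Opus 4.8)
The plan is to verify the two required properties---truthfulness and the approximation ratio---for each of the two mechanisms $R$ and $D$ separately, leaning almost entirely on the machinery already established. For truthfulness, both mechanisms are members of the template class $\calT$ (they are defined precisely by specifying a non-decreasing function $c(y,T1)$ together with the rule ``always select the high-bidder on $T2$''), so \cref{lem:template-truthfulness} applies verbatim. Thus I would first check that $c^R(y,T1)$ and $c^D(y,T1)$ are genuinely non-decreasing on $[0,1]$: for $D$ this is the step function jumping from $0$ to $1$ at $1/\phi$, and for $R$ it is $\frac{1}{5(1-y)}$ on $[0,4/5]$ (increasing, since the denominator $1-y$ decreases) continued by the constant $1$ on $[4/5,1]$, with the two pieces matching at $y=4/5$ where $\frac{1}{5(1-4/5)}=1$. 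Once monotonicity is confirmed, truthfulness is immediate from \cref{lem:template-truthfulness} with no further work.

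For the approximation ratios I would invoke \cref{lem:template-approx}, which reduces everything to checking the sandwich inequality
\begin{align*}
\frac{1}{\rho}-\frac{1-1/\rho}{y} \;\leqslant\; c(y,T1) \;\leqslant\; \frac{1-1/\rho}{1-y}
\end{align*}
pointwise in $y$, with $\rho=5/4$ for $R$ and $\rho=\phi$ for $D$. For mechanism $R$ with $\rho=5/4$, the right-hand bound is $\frac{1-4/5}{1-y}=\frac{1}{5(1-y)}$, which is \emph{exactly} $c^R$ on $[0,4/5]$, so the upper inequality holds with equality there; on $[4/5,1]$ I would confirm the upper bound $\frac{1}{5(1-y)}\geqslant 1$ still holds (it does, since $1-y\leqslant 1/5$) so that $c^R=1$ is admissible, and separately check the lower bound $\frac{1}{\rho}-\frac{1-1/\rho}{y}=\frac{4}{5}-\frac{1}{5y}$ never exceeds $c^R$ throughout $[0,1]$. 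For mechanism $D$ with $\rho=\phi$, using $1/\phi=\phi-1$ and the defining identity $\phi^2=\phi+1$ (equivalently $1-1/\phi=1/\phi^2=2-\phi$), I would verify that on $[0,1/\phi]$ the lower bound $\frac{1}{\phi}-\frac{1-1/\phi}{y}\leqslant 0=c^D$ (the left term being at most $1/\phi$ while the subtracted term dominates for small $y$, so the bound is nonpositive precisely up to the crossover $y=1/\phi$), and on $[1/\phi,1]$ the upper bound $\frac{1-1/\phi}{1-y}\geqslant 1=c^D$.

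The main obstacle, modest as it is, will be pinning down the two crossover points $y=4/5$ and $y=1/\phi$ and showing the step/constant pieces stay inside the feasible band on \emph{both} sides of each crossover, rather than just at the curve that was used to define the mechanism. Concretely, for $D$ the delicate verification is that at $y=1/\phi$ the lower bound reaches exactly $0$ (so the jump from $0$ to $1$ is legal) and that the upper bound $\frac{1-1/\phi}{1-y}$ equals exactly $1$ there, which is where the golden-ratio identity $1-1/\phi=1-y=1/\phi^2$ at $y=1/\phi$ does the work; for $R$ the analogous check is the tangency of $c^R$ to the upper bounding curve on $[0,4/5]$. Since these are precisely the curves drawn as the blue lines bounding the gray feasible region in \cref{fig:template}, the geometry guarantees feasibility and the algebra is a routine confirmation. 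Once these pointwise inequalities are in hand, \cref{lem:template-approx} delivers the claimed ratios and the theorem follows.
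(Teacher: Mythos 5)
Your proposal is correct and matches the paper's own proof, which likewise dispatches truthfulness via \cref{lem:template-truthfulness} (since $R,D\in\calT$) and the ratios by verifying the sandwich condition of \cref{lem:template-approx} for $\rho=5/4$ and $\rho=\phi$; your verification is simply a more explicit spelling-out of what the paper leaves as routine. One small wording slip: on $[0,4/5]$ the curve $c^R$ \emph{coincides} with the upper bounding curve, and the genuinely tight point is against the \emph{lower} bound at $y=1/2$ (where the two bounds meet, forcing $\rho\geqslant 5/4$), but your separate pointwise check of the lower bound already covers this.
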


\begin{proof}
Since $R,D\in \calT$, their truthfulness follows by \cref{lem:template-truthfulness}. The approximation ratios follow by verifying that the conditions of \cref{lem:template-approx} are satisfied for $\rho=5/4$ and $\rho=\phi$, respectively.
\end{proof}

We remark that the condition of \cref{lem:template-approx} can be proved to be not only sufficient but also necessary for achieving a $\rho$-approximation (using mechanisms from $\calT$). Then, it can be easily seen that the value of $5/4$ is the lowest value for which the condition of the lemma is feasible. Hence, mechanism $R$ is best possible among mechanisms that use our template. More interestingly, $5/4$ turns out to be the lower bound of any mechanism that \emph{always sells}, as we prove in the next theorem. Mechanism $D$ will be proved to be optimal among all deterministic truthful mechanisms in the next section.

\begin{theorem}\label{thm:always-sell}
The approximation ratio of any mechanism that always sells is at least $5/4$.
\end{theorem}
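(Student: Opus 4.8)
The plan is to exhibit a single family of profiles (or a small finite collection) parameterized by $y$ on which any always-sell mechanism is forced to incur approximation ratio at least $5/4$. The key observation is that an always-sell mechanism never selects $\oslash$, so on any profile it selects the high-bidder with some probability $1-c$ and the low-bidder with probability $c$, where $c\in[0,1]$. The crucial constraint I would exploit is truthfulness: for two profiles that share the same normalized bids (hence the same agents' view up to the expert valuations) but differ in the expert's opinion, the mechanism's selection probability $c$ may differ only in ways permitted by the ESw-IC and ECh-IC conditions. In fact, I expect the argument to mirror the necessity direction of \cref{lem:template-approx}: I would show that on a category-$T1$ profile with $\ell\geqslant h$ the mechanism faces the two competing bounds, while on the companion category-$T2$ profile the same bid value must be handled, and truthfulness links the two.

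Concretely, I would fix a normalized low-bid $y$ and consider two profiles sharing these bids: one of category $T1$ in which the low-bidder is the expert's favourite (so selecting the low-bidder is socially good when $y+\ell$ is large), and one of category $T2$ in which the high-bidder is favoured. Let $c$ denote the probability the mechanism assigns to the low-bidder on the $T1$ profile. On a profile like $\left[\begin{array}{c c c}0 & 1 & 0\\1 & y & 0\end{array}\right]$ the optimal welfare is $1+y$ while the mechanism achieves $(1+y)c+(1-c)$, forcing $c$ to be large; on a profile like $\left[\begin{array}{c c c}1 & 0 & 0\\1 & y & 0\end{array}\right]$ the optimal welfare is $2$ while the mechanism achieves $2(1-c)+yc$, forcing $c$ to be small. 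These two demands are exactly the left and right inequalities appearing in \cref{lem:template-approx}, now specialized with the always-sell restriction $d=1-c$ and $e=0$. The heart of the proof is then an algebraic feasibility check: I would show that the interval $\left[\tfrac{1}{\rho}-\tfrac{1-1/\rho}{y},\ \tfrac{1-1/\rho}{1-y}\right]$ of admissible $c$ values becomes empty when $\rho<5/4$ for some choice of $y$, so no truthful always-sell mechanism can simultaneously satisfy both constraints.

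To pin down the worst $y$, I would set the lower and upper bounds equal and solve for the threshold $\rho$: the condition $\tfrac{1}{\rho}-\tfrac{1-1/\rho}{y}=\tfrac{1-1/\rho}{1-y}$ yields a relation between $\rho$ and $y$, and minimizing the feasible $\rho$ over $y\in(0,1)$ should give $\rho=5/4$ at an interior point (I anticipate $y=4/5$, matching the kink in mechanism $R$). I would then argue that for any always-sell mechanism, whatever value of $c$ it uses at this $y$, one of the two profiles above witnesses a ratio of at least $5/4$; taking the supremum over the profile pair gives the bound. A subtle point to handle carefully is that an arbitrary always-sell mechanism need not satisfy $d=1-c$ a priori if it could select $\oslash$ — but ``always sells'' means precisely $e(y)\equiv 0$, so $d+c=1$ is automatic, and this is what makes the two-profile argument airtight.

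The main obstacle I expect is not the algebra but ensuring the lower bound applies to \emph{all} always-sell mechanisms, not just those following the template $\calT$ or those that are ``nice'' functions of $y$. In particular I must avoid assuming monotonicity of $c$ in $y$ or any structural form; the argument should use only the raw approximation-ratio inequalities at the two specific profiles for a single fixed $y$, so that it is genuinely unconditional within the always-sell class. Thus the cleanest route is to fix $y=4/5$ (or whatever the minimizer turns out to be), write down the two ratio expressions as functions of the single unknown $c\in[0,1]$, and verify that $\max$ of the two is minimized at a common value strictly exceeding or equal to $5/4$ — reducing the whole theorem to a one-variable optimization that needs no truthfulness characterization at all beyond the definition of always-sell.
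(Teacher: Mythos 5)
Your proposal has a genuine gap at its central step: the two profiles you pair, $\left[\begin{array}{c c c}0 & 1 & 0\\1 & y & 0\end{array}\right]$ and $\left[\begin{array}{c c c}1 & 0 & 0\\1 & y & 0\end{array}\right]$, lie in \emph{different} expert-ordering classes ($\ell>h$ versus $h>\ell$), and truthfulness does not force a common selection probability on them. Writing $c$ and $c'$ for the low-bidder probabilities at these two profiles, ESw-IC in the direction $(0,1,0)\to(1,0,0)$ gives $c\geqslant c'$, and in the direction $(1,0,0)\to(0,1,0)$ gives $1-c'\geqslant 1-c$, i.e., again $c\geqslant c'$ --- a one-sided inequality pointing the wrong way for any contradiction. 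A mechanism may set $c$ large on the first profile and $c'=0$ on the second: the truthful always-sell mechanism that simply sells to the expert's preferred bidder achieves ratio $1$ on \emph{both} of your profiles, and mechanism $R$ itself (ratio exactly $5/4$) satisfies your two constraints with different probabilities. The two inequalities of \cref{lem:template-approx} constrain a \emph{single} function $c(y,T1)$, and your profile pair does not put both constraints on the same quantity; consequently your closing claim that the argument needs ``no truthfulness characterization at all beyond the definition of always-sell'' is exactly backwards --- some IC argument tying together profiles with identical bids is indispensable, since otherwise each profile can be served by its own pointwise-optimal lottery.

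The paper's proof supplies precisely this missing ingredient and then picks a different second profile. First, it uses expert ESw-IC between \emph{two} profiles both satisfying $\ell>h$ and $\ell'>h'$: the two opposite deviations yield the two opposite inequalities \cref{eq:first-side} and \cref{eq:second-side}, hence $c$ is \emph{constant} over the entire region $\ell>h$ for fixed bids. Both hard instances are then chosen inside that region with $y=1/2$: the profile $(h,\ell,n)=(0,1,0)$ forces $p$ up, while $(h,\ell,n)=(0,\epsilon,1)$ --- where the expert's favourite is the no-sale option, the one outcome an always-sell mechanism can never deliver --- forces $p$ down; this is where the always-sell hypothesis actually bites, and your pairing never exploits it. The resulting one-variable problem $\max\left\{\frac{3}{2+p},\frac{2}{2-p}\right\}$ is minimized at $p=2/5$ with value $5/4$. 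A minor further correction: the binding bid value is $y=1/2$, where the feasibility interval of \cref{lem:template-approx} collapses (setting $u=1-1/\rho$, equality of the two bounds gives $u=\frac{y(1-y)}{1+y(1-y)}$, maximized at $y=1/2$, giving $\rho=5/4$), not your guessed $y=4/5$; the kink of $c^R$ at $4/5$ is merely where the upper constraint reaches probability $1$.
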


\begin{proof}
Consider profiles in agents' view $\left[\begin{array}{c c c}h & \ell& n\\1 & y & 0\end{array}\right]$ and let $M$ be any truthful always-sell mechanism. Then, $M$ can be thought of as using functions $d(y,h,\ell,n)$, $c(y,h,\ell,n)$ and $e(y,h,\ell,n)$ to assign probabilities to the high-bidder, the low-bidder and the no-sale option, respectively, such that $d(y,h,\ell,n)=1-c(y,h,\ell,n)$ and $e(y,h,\ell,n)=0$.

Since $M$ is truthful for the expert, the expert does not have any incentive to misreport her valuations from $(h,\ell,n)$ to $(h',\ell',n')$, for any $\ell>h$ and $\ell'>h'$. This means that
\begin{align*}
h \cdot (1-c(y,h,\ell,n)) + \ell \cdot c(y,h,\ell,n) \geq h \cdot (1-c(y,h',\ell',n')) + \ell \cdot c(y,h',\ell',n')
\end{align*}
or, equivalently, since $\ell > h$, 
\begin{align}\label[ineq]{eq:first-side}
c(y,h,\ell,n) \geq c(y,h',\ell',n')
\end{align}
Similarly, the expert does not have incentive to misreport her valuations from $(h',\ell',n')$ to $(h,\ell,n)$, for any $\ell>h$ and $\ell'>h'$. This gives us that
\begin{align*}
h' \cdot (1-c(y,h',\ell',n')) + \ell' \cdot c(y,h',\ell',n') \geq h' \cdot (1-c(y,h,\ell,n)) + \ell' \cdot c(y,h,\ell,n)
\end{align*}
or, equivalently, since $\ell' > h'$, 
\begin{align}\label[ineq]{eq:second-side}
c(y,h',\ell',n') \geq c(y,h,\ell,n)
\end{align}
Therefore, by \cref{eq:first-side} and \cref{eq:second-side}, we have that $c(y,h,\ell,n)$ is constant in all profiles $\left[\begin{array}{c c c}h & \ell& n\\1 & y & 0\end{array}\right]$ with $\ell>h$.
		
Now, let $\epsilon \in (0,1/2)$ and consider the following two profiles:
$$\left[\begin{array}{c c c}0 & 1 & 0\\1 & 1/2 & 0\end{array}\right] \mbox{ and }\left[\begin{array}{c c c}0 & \epsilon& 1\\1 & 1/2 & 0\end{array}\right]$$
Since $\ell > h$ in both profiles, any truthful mechanism $M$ that always sells the item behaves identically in all such profiles, for any $\epsilon \in (0,1/2)$. Assume that such a mechanism $M$ selects the low-bidder with probability $p$ (and the high-bidder with probability $1-p$). Then, the approximation ratio of $M$ is at least the maximum between its approximation ratio for these profiles, i.e., 
\begin{align*}
\sup_{\epsilon\in (0,1/2)}\left\{ \frac{\frac{3}{2}}{1-p+\frac{3}{2} p}, \frac{1}{1-p+(\epsilon+\frac{1}{2})p} \right\}=\max\left\{\frac{3}{2+p}, \frac{2}{2-p} \right\} .
\end{align*}
This is minimized to $5/4$ for $p=2/5$.
\end{proof}


\section{Unconditional lower bounds}\label{sec:lower-bounds}
In the previous sections, we presented (or informally discussed) lower bounds on the approximation ratio of truthful mechanisms belonging to particular classes. Here, we present our most general lower bound that holds for every truthful mechanism. The proof exploits the ECh-IC characterization from \cref{lem:ECh-IC}.

\begin{theorem}\label{thm:lower}
The approximation ratio of any truthful mechanism is at least $1.14078$.
\end{theorem}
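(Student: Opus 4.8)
The plan is to follow the same route as in the proof of \cref{thm:bim-lower}, but to apply it to a single carefully chosen \emph{slice} of the domain on which a general mechanism cannot exploit the bids to escape the expert-side constraints. First I would fix a low bid $y\in(0,1)$ (the high bid normalized to $1$) and the expert ordering in which the low-bidder is the favourite, the high-bidder is the second favourite, and $\oslash$ is last, and then vary only the intensity $x\in[0,1]$ of the second favourite. In the expert's view this is the one-parameter family
$$\left(\begin{array}{c c c}1 & x & 0\\ y & 1 & 0\end{array}\right),\qquad x\in[0,1].$$
Along this family the bids are constant, so the mechanism is fully described by the univariate probabilities $g(x),f(x),\eta(x)$, and \cref{lem:ECh-IC} applies verbatim: $f$ is non-decreasing, $g$ is non-increasing, and $g(x)=g(0)-xf(x)+\int_0^x f(t)\ud t$. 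The whole point of this ordering is that the favourite (low-bidder) has social welfare $1+y$, the second option (high-bidder) has social welfare $1+x$, and $\oslash$ has welfare $0$; hence the optimal welfare is $1+\max\{x,y\}$ and the two candidate optima cross exactly at the interior point $x=y$, which is what forces a genuine trade-off.

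Next I would write down the conditions that $\rho$-approximation imposes on $g,f$ in conjunction with ECh-IC. Since the expected welfare is $(1+y)g(x)+(1+x)f(x)$, approximation forces
\begin{align*}
(1+y)g(x)+(1+x)f(x)&\geqslant (1+y)/\rho,\qquad x\in[0,y],\\
(1+y)g(x)+(1+x)f(x)&\geqslant (1+x)/\rho,\qquad x\in[y,1].
\end{align*}
Using the ECh-IC identity to eliminate $g$ via $g(x)=g(0)-xf(x)+\int_0^x f$, both inequalities turn into constraints on $g(0)$ and on $F(x)=\int_0^x f$ only; for instance the second reads $(1+y)g(0)+(1+y)F(x)+(1-xy)f(x)\geqslant (1+x)/\rho$. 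Non-negativity of the probabilities, in particular $\eta(1)\geqslant 0$, gives the budget constraint $g(0)+F(1)\leqslant 1$ exactly as \cref{eq:lb-ineq1}, while the region-$1$ inequality evaluated at $x=0$ furnishes a lower bound on $g(0)$.

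From here the argument mirrors \cref{thm:bim-lower}. The binding family is the region-$2$ inequality: I would introduce a continuous $\lambda\leqslant f$ for which it holds with equality, obtain a first-order linear differential equation for $\Lambda(x)=\int_y^x\lambda$ (now with the $x$-dependent coefficient $(1-xy)$ multiplying $\lambda(x)$), solve it on $[y,1]$, and lower bound $\int_y^1 f$ by $\Lambda(1)$. Taking a suitable non-negative combination of this bound, the budget inequality, the lower bound on $g(0)$, and the region-$1$ inequality then produces a bound $\rho\geqslant \Phi(y)$ that every truthful mechanism must satisfy, and optimizing the free scale parameter $y\in(0,1)$ should yield the claimed value $\approx 1.14078$. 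I expect the main obstacle to be exactly this extremal computation: the coefficient $(1-xy)$ makes both the differential equation and the resulting expression $\Phi(y)$ genuinely depend on two coupled parameters, so one must simultaneously identify the correct multipliers (a non-negative combination only guarantees a \emph{valid} bound, and tightness has to be engineered) and then solve the transcendental optimization over $y$ to extract the exact constant, presumably in a closed form analogous to the Lambert-$W$ expression of \cref{thm:bim-lower}.
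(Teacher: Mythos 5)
Your reduction is exactly the paper's: the proof of \cref{thm:lower} works on precisely your slice, namely the one-parameter family $\left(\begin{array}{ccc}1 & x & 0\\ y & 1 & 0\end{array}\right)$ with the low-bidder as the expert's favourite, and it uses only the ingredients you list — the approximation constraints on the slice, the ECh-IC identity and monotonicity from \cref{lem:ECh-IC}, and the probability budget. But the paper then closes the argument with a purely finite, four-inequality combination, not the continuum/ODE machinery of \cref{thm:bim-lower}: fixing $y=\gamma\approx 0.29716$ (the root of $1-2\gamma-4\gamma^2-2\gamma^3=0$) and $\beta=(1+\gamma)^{-1}$, it takes your region-2 constraint at the \emph{single} point $x=\beta$, your region-1 constraint at $x=0$, the monotonicity consequence $g(\beta)\geqslant g(0)-\beta f(\beta)+\beta f(0)$ (i.e., $\int_0^\beta f(t)\ud t\geqslant \beta f(0)$ plugged into the ECh-IC identity), and the budget $g(\beta)+f(\beta)\leqslant 1$, and combines them with explicit multipliers to get $\rho\geqslant \frac{\beta+2\beta\gamma-\gamma^2}{\beta(1+\gamma)}\approx 1.14078$. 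No differential equation is needed.

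The genuine gap in your proposal is that it defers exactly the step that constitutes the theorem: a lower bound with a specific constant is only proved once the multipliers are exhibited and the optimization over $y$ is carried out, and you leave both as an acknowledged obstacle. Moreover, your expectation about the answer's form is a warning sign: the paper's constant is algebraic (determined by a cubic), not a Lambert-$W$ value, which indicates that at the optimum the region-2 constraint is \emph{not} binding throughout $[y,1]$ — the tight certificate uses it at one interior point only. Consequently your equality ansatz $\lambda\leqslant f$ driving the ODE (whose integrating factor here is of power type, $(1-xy)^{(1+y)/y}$, not exponential) yields a \emph{valid} bound on $\int_y^1 f(t)\ud t$ but has no reason to be tight, and without the computation you cannot assert that the resulting $\Phi(y)$ ever reaches $1.14078$. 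The fix is available inside your own constraint family: since it contains the paper's four inequalities, drop the ODE, evaluate region 1 at $x=0$ and region 2 at one free point $x=\beta>y$, add the monotonicity and budget inequalities, and optimize the finite linear combination over $(y,\beta)$; this recovers the stated constant.
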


\begin{proof}
Let $\gamma \in [0,1]$ be such that $1-2\gamma-4\gamma^2-2\gamma^3=0$ and $\beta=(1+\gamma)^{-1}$, i.e., $\beta \approx 0.7709$ and $\gamma\approx 0.29716$. Consider any $\rho$-approximate truthful mechanism and the profiles
$$\left(\begin{array}{c c c} 1 & \beta & 0 \\ \gamma & 1 & 0\end{array}\right) \mbox{ and } \left(\begin{array}{c c c} 1 & 0 & 0 \\ \gamma & 1 & 0\end{array}\right).$$
Since the bids are identical in both profiles, we can assume that the functions $f$ and $g$ are univariate (depending only on the expert's second highest valuation). Since the mechanism is $\rho$-approximate in both profiles, we have
\begin{align}\label[ineq]{eq:lb-gen-ineq1}
(1+\gamma)g(\beta)+(1+\beta)f(\beta) &\geqslant \frac{1+\beta}{\rho}
\end{align}
\begin{align}\label[ineq]{eq:lb-gen-ineq2}
(1+\gamma)g(0)+f(0) & \geqslant \frac{1+\gamma}{\rho}.
\end{align}
By the condition of \cref{eq:lem:ECh-IC} in \cref{lem:ECh-IC}, $g(x)=g(0)-xf(x)+\int_0^x{f(t)\ud t}$ which, due to the fact that $f$ is non-decreasing (again by \cref{lem:ECh-IC}), yields
$\int_0^\beta{f(t)\ud t}\geqslant \beta f(0)$. Hence, 
\begin{align}\label[ineq]{eq:lb-gen-ineq3}
g(x) &\geqslant g(0)-\beta f(\beta)+\beta f(0).
\end{align}
Also, clearly,
\begin{align}\label[ineq]{eq:lb-gen-ineq4}
1 &\geqslant g(\beta)+f(\beta).
\end{align}

Now, multiplying \cref{eq:lb-gen-ineq1}, \cref{eq:lb-gen-ineq2}, \cref{eq:lb-gen-ineq3}, and \cref{eq:lb-gen-ineq4} by $\frac{\gamma}{\beta+2\beta\gamma-\gamma^2}$, $\frac{\beta-\gamma}{\beta+2\beta\gamma-\gamma^2}$, $\frac{(\beta-\gamma)(1+\gamma)}{\beta+2\beta\gamma-\gamma^2}$, and $\frac{\beta(1+\gamma)}{\beta+2\beta\gamma-\gamma^2}$, and by summing them, we get
\begin{align*}
\rho &\geqslant \frac{\beta+2\beta\gamma-\gamma^2}{\beta(1+\gamma)}.
\end{align*}
Substituting $\beta$ and $\gamma$, we obtain that $\rho\geqslant 1.14078$ as desired.
\end{proof}

Our last statement shows that mechanism $D$ in \cref{sec:beyond} is best possible among all deterministic truthful mechanisms.

\begin{theorem}\label{thm:D-lower}
No truthful deterministic mechanism has approximation ratio better than $\phi$.
\end{theorem}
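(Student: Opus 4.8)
The plan is to prove the bound using \emph{only} expert-side truthfulness together with determinism, without ever invoking the bidders' monotonicity conditions. The key structural observation is that, since $M$ is deterministic, on a \emph{fixed} bid pair expert-truthfulness is exactly the statement that she cannot, by misreporting her valuation, steer $M$ to an option she values more. I would therefore freeze the bids throughout and only let the expert's report vary. Concretely, let $\rho$ be the approximation ratio of $M$; if $\rho\geq 2$ there is nothing to prove, so assume $\rho<2$ and fix any normalized low-bid $y\in(\rho-1,1)$ and any $h\in(\rho-1,1)$ (both intervals are nonempty because $\rho-1<1$). All profiles below are written in the agents' view with high-bidder $A$ bidding $1$ and low-bidder $B$ bidding $y$, so that they differ only in the expert's report. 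I would use the three profiles $P_1=\left[\begin{array}{c c c}0 & 1 & 0\\ 1 & y & 0\end{array}\right]$, $P_2=\left[\begin{array}{c c c}h & 0 & 1\\ 1 & y & 0\end{array}\right]$, and $P_3=\left[\begin{array}{c c c}0 & \epsilon & 1\\ 1 & y & 0\end{array}\right]$ for small $\epsilon>0$.

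First I would force two choices purely by the ratio. On $P_1$ the optimal welfare is $1+y$, while selecting $A$ yields welfare $1$ and $\oslash$ yields $0$; since $1+y>\rho$, a $\rho$-approximate $M$ must select $B$, i.e.\ $M(P_1)=B$. On $P_2$, selecting $\oslash$ yields welfare $1$ against an optimum of at least $1+h>\rho$, so $M(P_2)\neq\oslash$.

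Next I would chain truthfulness. Comparing the true profile $P_3$ against the deviation to the expert report of $P_1$ (legal, as the bids agree), expert-truthfulness gives that the $P_3$-value of $M(P_3)$ is at least the $P_3$-value of $M(P_1)=B$, namely $\epsilon$; as $A$ has $P_3$-value $0$, this forces $M(P_3)\in\{B,\oslash\}$. The main obstacle---and the step that is responsible for the bound being $\phi$ rather than the weaker $\sqrt2$ that the ``obvious'' profiles alone yield---is ruling out $\oslash$ here, since on $P_3$ the no-sale outcome is itself near-optimal and so is not excluded by the ratio directly. I would rule it out by a second truthfulness comparison, now from the true profile $P_2$ to the deviation to the expert report of $P_3$: if $M(P_3)=\oslash$, then the $P_2$-value of this deviation is $v_{P_2}(\oslash)=1$, so truthfulness forces the $P_2$-value of $M(P_2)$ to equal $1$ as well, i.e.\ $M(P_2)=\oslash$, contradicting the previous paragraph. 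Hence $M(P_3)=B$.

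Finally I would conclude. With $M(P_3)=B$ the realized welfare is $\epsilon+y$, whereas the optimum is $\max\{1,\,\epsilon+y\}=1$ for small $\epsilon$ (using $y<1$), giving ratio $1/(\epsilon+y)$. Letting $\epsilon\to 0$ yields $\rho\geq 1/y$, and since this holds for every $y\in(\rho-1,1)$, letting $y\to(\rho-1)^+$ gives $\rho\geq 1/(\rho-1)$, i.e.\ $\rho^2-\rho-1\geq 0$, hence $\rho\geq\phi$, contradicting $\rho<2<\ldots$ only if $\rho<\phi$ and establishing the claim in all cases. I expect the only delicate points beyond the obstacle above to be bookkeeping: checking that the chosen $y,h,\epsilon$ keep all three profiles valid normalized expert profiles and that $\epsilon+y<1$, both of which are immediate from $\rho<2$.
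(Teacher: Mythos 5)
Your proof is correct, and it takes a genuinely different route from the paper's. The paper's argument never leaves a fixed ordinal type for the expert: it invokes the Myerson-style level-change characterization (\cref{lem:ECh-IC}) to prove a structural lemma that a deterministic truthful mechanism must select the \emph{same} option for every intermediate value $x\in(0,1)$ of the expert's second-favourite option (otherwise $g(0,h,\ell,z)-g(x_3,h,\ell,z)$ is forced to be non-integer), and then plays two profiles with bid row $(0,1/\phi,1)$ and middle values $1-\epsilon$ versus $\epsilon/\phi^2$ against each other. You instead use only swap-type (ESw) deviations and never touch the integral characterization: the $\rho$-approximation guarantee alone pins down $M(P_1)=B$ and $M(P_2)\neq\oslash$ (this is where determinism is used --- a single outcome, not a distribution, is forced), and two chained truthfulness comparisons (true $P_3$ deviating to $P_1$'s report, then true $P_2$ deviating to $P_3$'s report) force $M(P_3)=B$; your auxiliary profile $P_2$ is exactly the device that excludes $\oslash$ at $P_3$, which is indeed the step that lifts the bound to $\phi$. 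I checked the details: all three profiles are validly normalized expert reports over the same bids $(1,y)$, the constraints $y,h\in(\rho-1,1)$ and $\epsilon\in(0,1-y)$ make every step legal, and $\rho\geq 1/(\epsilon+y)$ with $\epsilon\to 0$ and $y\to(\rho-1)^+$ gives $\rho(\rho-1)\geq 1$, i.e.\ $\rho\geq\phi$, with $\rho\geq 2$ trivial. As for what each approach buys: the paper's route yields a reusable structural fact (for fixed bids and fixed ordering, deterministic truthful mechanisms ignore the expert's cardinal intensity) but leans on \cref{lem:ECh-IC}; yours is more elementary and self-contained, and since it uses only order-changing misreports it applies even to deterministic mechanisms that are merely ESw-IC, complementing the paper's proof, which uses only ECh-IC. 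One cosmetic blemish: your closing sentence garbles the case split --- the clean statement is simply that assuming $\rho<2$ you derive $\rho\geq\phi$, while $\rho\geq 2>\phi$ needs no argument --- but the underlying logic is sound.
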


\begin{proof}
Let $M$ be a deterministic truthful mechanism. Consider a profile  
$\left(\begin{array}{c c c} 1 & x & 0\\ h & \ell & z\end{array}\right)$ in expert's view,
for some combination of values for $h$, $\ell$, and $z$. We will first show that $M$ selects the same option for every value of $x\in (0,1)$. Indeed, assume otherwise; due to \cref{lem:ECh-IC}, $f$ must be non-decreasing in $x$ and, hence, $f(x_1,h,\ell,z)=0$ and $f(x_2,h,\ell,z)=1$ for two different values $x_1$ and $x_2$ in $(0,1)$ with $x_1<x_2$. Let $x_3\in (x_2,1)$, i.e., $f(x_3,h,\ell,z)=1$ due to monotonicity. The condition of \cref{eq:lem:ECh-IC} in \cref{lem:ECh-IC} requires that
\begin{align*}
g(x_3,h,\ell,z) & = g(0,h,\ell,z)- x_3 +\int_0^{x_3}{f(t,h,\ell,z)\ud t}.
\end{align*}
By our assumptions on $f$ (and due to its monotonicity), we also have that
\begin{align*}
x_3-x_2 &\leqslant \int_0^{x_3}{f(t,h,\ell,z)\ud t}
\leqslant x_3-x_1.
\end{align*}
These last two (in)equalities imply that $g(0,h,\ell,z)-g(x_3,h,\ell,z)$ lies between $x_2$ and $x_3$, i.e., it is non-integer. This contradicts the fact that $M$ is deterministic.

Now let $\epsilon>0$ be negligibly small and consider the two profiles
$$\left(\begin{array}{c c c} 1 & 1-\epsilon & 0\\ 0 & 1/\phi & 1\end{array}\right) \mbox{ and } \left(\begin{array}{c c c} 1 & \epsilon/\phi^2 & 0\\ 0 & 1/\phi & 1\end{array}\right).$$
If $M$ selects the low-bidder in both profiles, its approximation ratio at the right one is $\frac{1}{\epsilon/\phi^2+1/\phi}\geqslant \phi-\epsilon$. Otherwise, its approximation ratio at the left profile is $1+1/\phi-\epsilon$. In any case, the approximation ratio is at least $\phi-\epsilon$, and the proof is complete.
\end{proof}

Of course, \cref{thm:D-lower} is meaningful for cardinal mechanisms. Deterministic ordinal mechanisms can be easily seen to be at least $2$-approximate.


\section{Conclusion}\label{sec:open}
We have presented a series of positive and negative results for a simple hybrid social choice model, which combines elements of mechanism design with and without monetary transfers. Closing the gap between the approximation ratio of $5/4$ of the template mechanism $R$ (see \cref{sec:beyond}) and our general unconditional lower bound of approximately $1.14$ for any truthful mechanism (see \cref{sec:lower-bounds}) is an important and definitely non-trivial challenge. Besides this concrete open problem, there are many natural extensions of the model that are worth studying. For example, we have weighed equally the contribution of the expert and the agents to the social welfare. Generalizing the definition of the welfare by introducing a factor of $\alpha>0$, by which the contribution of the expert will be multiplied, is a first such extension.

Another extension could be to consider a different optimization objective, possibly by mixing the welfare of the expert with the revenue that can be extracted by the agents. The underlying mechanism design problem now seems to be quite different from the one we have studied here. For the revenue to be (part of) a meaningful objective, one would have to restrict attention to \emph{individually rational} mechanisms, which guarantee that the agents obtain non-negative utility. This is important, as otherwise a truthful mechanism could ignore their bids and charge them the maximum amount. In fact, the literature of revenue-maximization (e.g., see \citep{myerson1981optimal}) focuses on mechanisms which are individually rational. 

It is not hard to see however that in our problem, bid-independent, individually rational mechanisms always extract zero revenue. It is also well-documented that revenue maximization is a less meaningful objective in the absence of prior knowledge of the agents' values \citep{hartline2013mechanism} and it is commonly assumed that these values are drawn from some known distributions \citep{AGT,myerson1981optimal}. Hence, designing efficient truthful mechanisms for such an optimization objective requires radically different ideas, or perhaps even the migration to a Bayesian setting, like the one mentioned above.

Our model of one expert and two competing agents can be thought of as the simplest possible non-trivial hybrid social choice scenario. There are many important generalizations that one could consider for future research. Indicatively, these could include larger populations of experts and agents, more than one assets to be transferred with combinatorial constraints governing their acquisition, or even dynamic expert preferences that depend on the bidding information. 
These questions pose quite a few challenges. For example, even having two experts enlarges the space of possible truthful mechanisms significantly, as now other mechanisms, known as \emph{duples} and \emph{hierarchical unilaterals} come into play. Adding more agents seems more manageable, but our characterization of \cref{lem:ECh-IC} no longer applies, at least in its current form. The case of more than one assets seems even more challenging, because the setting on the agents' part is no longer single-parameter, and therefore we can not use Myerson's characterization from \cref{lem:BCh-IC}.

\section*{Acknowledgments}
We are grateful to Arunava Sen for fruitful discussions at initial stages of this paper. This work was partially supported by COST Action CA 16228 ``European Network for Game Theory'', by a PhD scholarship from the Onassis Foundation, by the ERC Advanced Grant 321171 (ALGAME), by the ERC Grant 639945 (ACCORD), by the Swiss National Science Foundation under contract number 200021\_165522, and by the IIT Kanpur Grant IITK/CS/2017198.


\bibliographystyle{named}
\bibliography{expert-bib}

\begin{thebibliography}{}

\bibitem[\protect\citeauthoryear{Amanatidis \bgroup \em et al.\egroup
  }{2019}]{ABFV19}
G.~Amanatidis, G.~Birmpas, A.~Filos{-}Ratsikas, and A.~A. Voudouris.
\newblock Peeking behind the ordinal curtain: Improving distortion via cardinal
  queries.
\newblock {\em CoRR}, abs/1907.08165, 2019.

\bibitem[\protect\citeauthoryear{Anshelevich \bgroup \em et al.\egroup
  }{2018}]{ABEPS18}
E.~Anshelevich, O.~Bhardwaj, E.~Elkind, J.~Postl, and P.~Skowron.
\newblock Approximating optimal social choice under metric preferences.
\newblock {\em Artificial Intelligence}, 264:27--51, 2018.

\bibitem[\protect\citeauthoryear{Barbera \bgroup \em et al.\egroup
  }{1998}]{Barbera98}
S.~Barbera, A.~Bogomolnaia, and H.~van~der Stel.
\newblock Strategy-proof probabilistic rules for expected utility maximizers.
\newblock {\em Mathematical Social Sciences}, 35(2):89--103, 1998.

\bibitem[\protect\citeauthoryear{Bhaskar \bgroup \em et al.\egroup
  }{2018}]{bhaskar2018}
U.~Bhaskar, V.~Dani, and A.~Ghosh.
\newblock Truthful and near-optimal mechanisms for welfare maximization in
  multi-winner elections.
\newblock In {\em Proceedings of the 32nd AAAI Conference on Artificial
  Intelligence (AAAI)}, pages 925--932, 2018.

\bibitem[\protect\citeauthoryear{Boutilier \bgroup \em et al.\egroup
  }{2015}]{BCHL+15}
C.~Boutilier, I.~Caragiannis, S.~Haber, T.~Lu, A.~D. Procaccia, and O.~Sheffet.
\newblock Optimal social choice functions: a utilitarian view.
\newblock {\em Artificial Intelligence}, 227:190--213, 2015.

\bibitem[\protect\citeauthoryear{Brandt \bgroup \em et al.\egroup
  }{2016}]{BCEL+16}
F.~Brandt, V.~Conitzer, U.~Endriss, A.~D. Procaccia, and J.~Lang.
\newblock {\em Handbook of computational social choice}.
\newblock Cambridge University Press, 2016.

\bibitem[\protect\citeauthoryear{Caragiannis and Procaccia}{2011}]{CP11}
I.~Caragiannis and A.~D. Procaccia.
\newblock Voting almost maximizes social welfare despite limited communication.
\newblock {\em Artificial Intelligence}, 175(9--10):1655--1671, 2011.

\bibitem[\protect\citeauthoryear{Caragiannis \bgroup \em et al.\egroup
  }{2016}]{CPS16}
I.~Caragiannis, A.~D. Procaccia, and N.~Shah.
\newblock When do noisy votes reveal the truth?
\newblock {\em ACM Transactions on Economics and Computation}, 4(3):\ article
  15, 2016.

\bibitem[\protect\citeauthoryear{Caragiannis \bgroup \em et al.\egroup
  }{2017}]{CNPS16}
I.~Caragiannis, S.~Nath, A.~D. Procaccia, and N.~Shah.
\newblock Subset selection via implicit utilitarian voting.
\newblock {\em Journal of Artificial Intelligence Research}, 58:123--152, 2017.

\bibitem[\protect\citeauthoryear{Cheng}{2016}]{analytic}
Y.~K. Cheng.
\newblock Better strategyproof mechanisms without payments or prior -- an
  analytic approach.
\newblock In {\em Proceedings of the 25th International Joint Conference on
  Artificial Intelligence (IJCAI)}, pages 194--200, 2016.

\bibitem[\protect\citeauthoryear{Clarke}{1971}]{clarke1971multipart}
E.~H. Clarke.
\newblock Multipart pricing of public goods.
\newblock {\em Public Choice}, 11(1):17--33, 1971.

\bibitem[\protect\citeauthoryear{Dutta \bgroup \em et al.\egroup
  }{2007}]{dutta2007strategy}
B.~Dutta, H.~Peters, and A.~Sen.
\newblock Strategy-proof cardinal decision schemes.
\newblock {\em Social Choice and Welfare}, 28(1):163--179, 2007.

\bibitem[\protect\citeauthoryear{Feige and Tennenholtz}{2010}]{Feige10}
U.~Feige and M.~Tennenholtz.
\newblock Responsive lotteries.
\newblock In {\em Proceeding of the 3rd International Symposium on Algorithmic
  Game Theory (SAGT)}, pages 150--161. Springer, 2010.

\bibitem[\protect\citeauthoryear{Filos-Ratsikas and
  Miltersen}{2014}]{filos2014truthful}
A.~Filos-Ratsikas and P.~B. Miltersen.
\newblock Truthful approximations to range voting.
\newblock In {\em Proceedings of the 10th International Conference on Web and
  Internet Economics (WINE)}, pages 175--188, 2014.

\bibitem[\protect\citeauthoryear{Filos-Ratsikas \bgroup \em et al.\egroup
  }{2014}]{filos2014social}
A.~Filos-Ratsikas, S.~K.~S. Frederiksen, and J.~Zhang.
\newblock {Social welfare in one-sided matchings: Random priority and beyond}.
\newblock In {\em Proceedings of the 7th Symposium of Algorithmic Game Theory
  (SAGT)}, pages 1--12, 2014.

\bibitem[\protect\citeauthoryear{Freixas}{1984}]{Freixas84}
X.~Freixas.
\newblock A cardinal approach to straightforward probabilistic mechanisms.
\newblock {\em Journal of Economic Theory}, 34(2):227--251, 1984.

\bibitem[\protect\citeauthoryear{Gibbard}{1973}]{Gib73}
A.~Gibbard.
\newblock Manipulation of voting schemes.
\newblock {\em Econometrica}, 41:587--602, 1973.

\bibitem[\protect\citeauthoryear{Gibbard}{1977}]{Gib77}
A.~Gibbard.
\newblock Manipulation of schemes that mix voting with chance.
\newblock {\em Econometrica}, 45:665--681, 1977.

\bibitem[\protect\citeauthoryear{Gibbard}{1978}]{Gibbard78}
A.~Gibbard.
\newblock Straightforwardness of game forms with lotteries as outcomes.
\newblock {\em Econometrica}, 46(3):595--614, 1978.

\bibitem[\protect\citeauthoryear{Groves}{1973}]{groves1973incentives}
T.~Groves.
\newblock Incentives in teams.
\newblock {\em Econometrica: Journal of the Econometric Society}, pages
  617--631, 1973.

\bibitem[\protect\citeauthoryear{Guo and Conitzer}{2010}]{GC:10}
M.~Guo and V.~Conitzer.
\newblock Strategy-proof allocation of multiple items between two agents
  without payments or priors.
\newblock In {\em Proceedings of the 9th International Conference on Autonomous
  Agents and Multiagent Systems (AAMAS)}, pages 881--888, 2010.

\bibitem[\protect\citeauthoryear{Hartline}{2013}]{hartline2013mechanism}
J.~D. Hartline.
\newblock Bayes-nash approximation.
\newblock In {\em Mechanism design and approximation}, chapter~6. 2013.

\bibitem[\protect\citeauthoryear{Hylland}{1980}]{hylland1980strategy}
A.~Hylland.
\newblock Strategy proofness of voting procedures with lotteries as outcomes
  and infinite sets of strategies.
\newblock {\em Unpublished paper, University of Oslo.[341, 349]}, 1980.

\bibitem[\protect\citeauthoryear{Janssen}{2004}]{janssen2004auctioning}
M.~C.~W. Janssen, editor.
\newblock {\em Auctioning public assets: analysis and alternatives}.
\newblock Cambridge University Press, 2004.

\bibitem[\protect\citeauthoryear{Laffont and
  Maskin}{1980}]{laffont1980differential}
J.~Laffont and E.~Maskin.
\newblock A differential approach to dominant strategy mechanisms.
\newblock {\em Econometrica: Journal of the Econometric Society}, pages
  1507--1520, 1980.

\bibitem[\protect\citeauthoryear{Myerson}{1981}]{myerson1981optimal}
R.~B. Myerson.
\newblock Optimal auction design.
\newblock {\em Mathematics of Operations Research}, 6(1):58--73, 1981.

\bibitem[\protect\citeauthoryear{Nisan and Ronen}{2001}]{nisan2001algorithmic}
N.~Nisan and A.~Ronen.
\newblock Algorithmic mechanism design.
\newblock {\em Games and Economic Behavior}, 35(1-2):166--196, 2001.

\bibitem[\protect\citeauthoryear{Nisan \bgroup \em et al.\egroup }{2007}]{AGT}
N.~Nisan, T.~Roughgarden, E.~Tardos, and V.~V. Vazirani.
\newblock {\em Algorithmic Game Theory}.
\newblock Cambridge University Press, 2007.

\bibitem[\protect\citeauthoryear{Procaccia and Tennenholtz}{2013}]{PT09}
A.~D. Procaccia and M.~Tennenholtz.
\newblock Approximate mechanism design without money.
\newblock {\em {ACM} Transactions on Economics and Computation},
  1(4):18:1--18:26, 2013.

\bibitem[\protect\citeauthoryear{PTI}{2018}]{Indi2018}
Press Trust of~India PTI.
\newblock Consulting international agencies, experts on spectrum auction: Trai
  chief.
\newblock News article, May 2018.
\newblock accessed July 12, 2018.

\bibitem[\protect\citeauthoryear{Rockafellar}{2015}]{rockafellar2015convex}
R.~T. Rockafellar.
\newblock {\em Convex analysis}.
\newblock Princeton university press, 2015.

\bibitem[\protect\citeauthoryear{Satterthwaite}{1975}]{Satterthwaite75}
M.~A. Satterthwaite.
\newblock Strategy-proofness and {Arrow}'s conditions: Existence and
  correspondence theorems for voting procedures and social welfare functions.
\newblock {\em Journal of Economic Theory}, 10(2):187--217, 1975.

\bibitem[\protect\citeauthoryear{Vickrey}{1961}]{vickrey1961counterspeculation}
W.~Vickrey.
\newblock Counterspeculation, auctions, and competitive sealed tenders.
\newblock {\em The Journal of Finance}, 16(1):8--37, 1961.

\end{thebibliography}

\end{document}